\def\triangleq{\overset{\mathrm{def}}{=}}
\let\eps=\varepsilon
\newtheorem{theorem}{Theorem}[section] 
\newtheorem{definition}{Definition}[section] 
\newtheorem{remark}{Remark}[section]
\newtheorem{lemma}{Lemma}[section]
\newtheorem{proposition}{Proposition}[section]
\newtheorem{corollary}{Corollary}[section]
\newtheorem{claim}{Claim}[section]
\newtheorem{notation}{Notation}[section]
\crefname{assumption}{Assumption}{Assumptions}
\crefname{conjecture}{Conjecture}{Conjectures}
\crefname{claim}{Claim}{Claims}
\crefname{corollary}{Corollary}{Corollaries}
\crefname{definition}{Definition}{Definitions}
\crefname{example}{Example}{Examples}
\crefname{exercise}{Exercise}{Exercises}
\crefname{fact}{Fact}{Facts}
\crefname{lemma}{Lemma}{Lemmas}
\crefname{notation}{Notation}{Notations}
\crefname{note}{Notes}{Notes}
\crefname{observation}{Observation}{Observations}
\crefname{proposition}{Proposition}{Propositions}
\crefname{problem}{Problem}{Problems}
\crefname{question}{Question}{Questions}
\crefname{remark}{Remark}{Remarks}
\crefname{theorem}{Theorem}{Theorems}
\DeclareMathOperator{\poly}{poly}
\newcommand{\R}{\mathbb{R}}
\let\proj=\Pi
\DeclareMathOperator{\tr}{Tr}
\newcommand{\unit}[1]{\overline{#1}}
\DeclareMathOperator{\spn}{span}
\newcommand{\ind}[1]{\mathbbm{1}_{#1}}
\DeclareMathOperator{\supp}{support}
\DeclareMathOperator{\rank}{rank}
\let\es=\emptyset
\newcommand{\bestS}{{\mathcal{S}^\ast}}
\newcommand{\lasserreii}[2]{\mathrm{Lasserre}_{#1}(#2)}
\newcommand{\bestSfi}[1]{\|\xvec_{\circ| \bestS(\circ)}\|^2}
\newcommand{\xvec}{\vec{x}} 
\newcommand{\yvec}{\vec{y}} 
\newcommand{\xmat}{\vec{X}} 
\newcommand{\ymat}{\vec{Y}} 
\newcommand{\zmat}{\vec{Z}} 
\newcommand{\usc}{\textsc{\sf Uniform Sparsest Cut}}
\newcommand{\nusc}{\textsc{\sf Non-Uniform Sparsest Cut}}
\newcommand{\lh}{\textsc{\sf Lasserre Hierarchy}}
\DeclareMathOperator*{\st}{st}
\DeclareMathOperator*{\argmin}{argmin}
\newcommand{\expct}[2]{\mathbb{E}_{#1}\bigg[ #2 \bigg]}
\newcounter{alg-count}
\crefname{alg-count}{step}{steps}
\crefname{program}{Algorithm}{Algorithms}
\newenvironment{inp}{
\trivlist \item[\hskip\labelsep\textbf{Input:}]
\begin{list}{\labelitemi}{\leftmargin=0.5em}
}{
\end{list} {\endtrivlist}
}
\newenvironment{outp}{
\trivlist \item[\hskip\labelsep\textbf{Output:}]
\begin{list}{\labelitemi}{\leftmargin=0.5em}
}{
\end{list}{\endtrivlist}
}
\newenvironment{proc}{
\trivlist \item[\hskip\labelsep\textbf{Procedure:}]
\begin{list}{\arabic{alg-count}.}{\leftmargin=0.5em
\usecounter{alg-count}}
}{
\end{list}{\endtrivlist}
}
\def\ngap{}
\def\vec{}
\renewcommand{\nusc}{{\sc Sparsest Cut}}
\renewcommand{\usc}{{\sc Uniform Sparsest Cut}}
\newcommand{\wvec}{\vec{w}}
\newcommand{\vnote}[1]{}
\newcommand{\aknote}[1]{}
\newcommand{\sde}{{\mathcal{S}}}
\newcommand{\sdn}{{\widetilde{\mathcal{S}}}}
\title{
  {\bf 
    Approximating Non-Uniform Sparsest Cut via Generalized Spectra}}
\author{ Venkatesan Guruswami\footnote{ Computer Science Department,
    Carnegie Mellon University.  Supported in part by NSF grants
    CCF-0963975 and CCF-1115525.  Email: {\tt guruswami@cmu.edu}} \and
  Ali Kemal Sinop\footnote{ Center for Computational Intractability,
    Department of Computer Science, Princeton University.  Supported
    in part by NSF CCF-1115525, and MSR-CMU Center for Computational
    Thinking.  Email: {\tt asinop@cs.cmu.edu}}} \date{\today}
\begin{document}

\maketitle
\thispagestyle{empty}

\begin{abstract}
  We give an approximation algorithm for non-uniform sparsest cut with
  the following guarantee: For any $\eps,\delta \in (0,1)$, given cost
  and demand graphs with edge weights $C, D:\binom{V}{2}\to \R_+$
  respectively, we can find a set $T \subseteq V$ with
  $\frac{C(T,V\setminus T)}{D(T,V\setminus T)}$ at most
  $\frac{1+\eps}{\delta}$ times the optimal non-uniform sparsest cut
  value, in time $2^{r/(\delta \eps)} \poly(n)$ provided $\lambda_r
  \ge \Phi^\ast/(1-\delta)$. Here $\lambda_r$ is the $r$'th smallest
  generalized eigenvalue of the Laplacian matrices of cost and demand
  graphs; $C(T,V\setminus T)$ (resp. $D(T,V\setminus T)$) is the
  weight of edges crossing the $(T,V\setminus T)$ cut in cost
  (resp. demand) graph and $\Phi^\ast$ is the sparsity of the optimal
  cut. In words, we show that the non-uniform sparsest cut problem is
  easy when the generalized spectrum grows moderately fast. To the
  best of our knowledge, there were no results based on higher order
  spectra for non-uniform sparsest cut prior to this work.

  Even for uniform sparsest cut, the quantitative aspects of our
  result are somewhat stronger than previous methods.  Similar results
  hold for other expansion measures like edge expansion, normalized
  cut, and conductance, with the $r$'th smallest eigenvalue of the
  {\em normalized} Laplacian playing the role of $\lambda_r(G)$ in the
  latter two cases.

  Our proof is based on an $\ell_1$-embedding of vectors from a
  semi-definite program from the Lasserre hierarchy.  The embedded
  vectors are then rounded to a cut using standard threshold
  rounding. We hope that the ideas connecting $\ell_1$-embeddings to
  Lasserre SDPs will find other applications. Another aspect of the
  analysis is the adaptation of the column selection paradigm from our
  earlier work on rounding Lasserre SDPs~\cite{gs11-qip} to pick a set
  of {\em edges} rather than vertices. This feature is important in
  order to extend the algorithms to non-uniform sparsest cut.
\end{abstract}

\section{Introduction}
\ngap The problem of finding sparsest cut on graphs is a fundamental
optimization problem that has been intensively studied. The problem is
inherently interesting, and is important as a building block for
divide-and-conquer algorithms on graphs as well as to many
applications such as image segmentation~\cite{sm00,sg07a}, VLSI
layout~\cite{bl84}, packet routing in distributed
networks~\cite{ap90a}, etc.

Let us define the prototypical sparsest cut problem more
concretely. We are given a set of $n$-vertices, $V$, along with two
functions $C, D: \binom{V}{2}\to \R_+$ representing edge weights of
some cost and demand graphs, respectively.  Then given any subset $T
\subset V$, we define its {\em sparsity} as the following ratio:
\begin{equation}
\label{eq:def-sc}
\Phi_T \triangleq \frac{\sum_{u<v} C_{u,v} \cdot 
| \ind{T}(u) - \ind{T}(v) |}{ \sum_{u<v} D_{u,v} \cdot 
| \ind{T}(u) - \ind{T}(v) |},
\end{equation}
where $\ind{T}$ is the indicator function of $T$.  Our goal in the
\nusc\ problem is to find a subset $T\subset V$ with minimum sparsity,
which we denote by $\Phi^\ast \triangleq \min_{T\subset V} \Phi_T$.
The special case of demand graph being a clique, where the denominator
of~\cref{eq:def-sc} becomes $|T| \cdot |V\setminus T|$, is called the
\textsc{uniform sparsest cut} problem.

The value of the sparsest cut can be understood in terms of the
spectral properties of cost and demand graphs.  Let $0 \le \lambda_1
\le \lambda_2 \le \cdots \le \lambda_m$ be the {\em generalized
  eigenvalues} between the Laplacian matrices of cost and demand
graphs (see~\Cref{sec:prelim} for formal definitions).  In a way
similar to the ``easy" direction of Cheeger's inequality, we can use
Courant-Fischer Theorem to show that $\lambda_1 \le
\Phi^\ast$. \vnote{Should it be $\lambda_2 \le \Phi^\ast$, or is it
  $\lambda_1$ for generalized eigenvalues. Would be good to mention it
  explicitly if latter.} \aknote{Should be $\lambda_1$, good catch!
  See footnote.}
At some point, the eigenvalue $\lambda_r$ will exceed
$\Phi^\ast$. 
%
%
Our main result is an approximation algorithm for \nusc\ which is
efficient when this happens for small $r$. In particular:
\begin{theorem}
\label{thm:intro-main-1}
Given $V$ and $C,D:\binom{V}{2}\to \R_+$, for any positive integer $r$,
one of the following holds. 
\begin{itemize}
\itemsep=0ex
\item Either one can find $T \subset V$ with $\Phi_T \le 2 \Phi^\ast$
  in time $2^{O\left(r\right)} \poly(n)$ where $n=|V|$,
\item Or $\Phi^\ast \ge 0.49 \lambda_r$.
\end{itemize}
\end{theorem}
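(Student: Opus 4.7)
The plan is to round a Lasserre SDP relaxation of \nusc\ of level $\Theta(r)$, combining the column-selection paradigm from \cite{gs11-qip} (adapted to edges rather than vertices, since sparsity is an edge ratio) with an $\ell_1$-embedding of the conditioned Lasserre vectors followed by threshold rounding.

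\textbf{Setup and spectral bookkeeping.} I would first solve the level-$\Theta(r)$ Lasserre relaxation of \nusc, obtaining vectors $\{v_S\}_{|S|\le r}$ whose moment matrix is PSD and consistent with integer cuts; since the SDP is a relaxation, its objective is at most $\Phi^\ast$. I would then re-express the relevant quantities in terms of the generalized spectrum of the Laplacian pair $(L_C,L_D)$: the Courant--Fischer bound giving $\lambda_1 \le \Phi^\ast$ extends to show that a large $\lambda_r$ forces any near-optimal SDP solution to concentrate on at most $r$ dominant generalized-eigenvector directions.

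\textbf{Edge selection and conditioning.} The main combinatorial step is to identify a set $\mathcal{E}$ of $O(r)$ edges whose Lasserre vectors span a subspace $W$ such that, after projecting onto $W^\perp$, every remaining edge has generalized Rayleigh quotient at least $\lambda_r$ against $L_D$. Enumerating the $2^{|\mathcal{E}|} = 2^{O(r)}$ partial $\{0,1\}$-assignments to $\mathcal{E}$ and pushing each through Lasserre conditioning yields a family of reduced vector configurations; this is what produces the $2^{O(r)}\poly(n)$ runtime, and crucially, after conditioning the vectors live essentially outside the top-$r$ generalized-eigenspace.

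\textbf{$\ell_1$-embedding and rounding.} For each conditioning I would construct an $\ell_1$-embedding of the vertices from the conditioned Lasserre vectors and apply standard threshold rounding, outputting the sparsest cut encountered across all conditionings and thresholds. The core technical lemma, which I expect to be the hardest step, is to show that the expected ratio
\[
\frac{\text{expected cost of threshold cut}}{\text{expected demand of threshold cut}} \ \le \ \frac{1+\eps}{\delta}\cdot \Phi^\ast
\]
whenever $\lambda_r \ge \Phi^\ast/(1-\delta)$. The edge-selection guarantee is used here: the spectral gap after conditioning ensures that the ``spillover'' of the $\ell_1$-embedding into the low-generalized-eigenvalue subspace contributes only a $\delta$-fraction of the demand cut, which is precisely what absorbs the loss.

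\textbf{Concluding constants.} Setting $\eps$ small and $\delta$ just above $1/2$ (e.g.\ $\eps = 0.02$, $\delta = 0.51$) gives $(1+\eps)/\delta \le 2$ and $\Phi^\ast/(1-\delta) \le \Phi^\ast/0.49$, so either $\Phi^\ast \ge 0.49\,\lambda_r$ (second alternative), or the rounding produces $\Phi_T \le 2\Phi^\ast$ in time $2^{O(r/(\eps\delta))}\poly(n) = 2^{O(r)}\poly(n)$ (first alternative). The principal obstacle is the $\ell_1$-distortion analysis after edge-conditioning — converting the spectral hypothesis on $\lambda_r$ into a quantitative bound on the distortion between the Lasserre moments and the expected cut metric, since unlike the uniform case one must simultaneously track cost and demand Laplacians in the denominator of a ratio.
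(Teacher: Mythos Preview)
Your architecture matches the paper's: Lasserre SDP at level $\Theta(r)$, edge-based seed selection, an $\ell_1$-embedding indexed by the $2^{O(r)}$ labelings of the seed endpoints, threshold rounding, and a spectral bound via $\lambda_r$. But two of the technical mechanisms you describe are not what the paper does, and as written they would not go through.

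First, seed selection is not a per-edge generalized-Rayleigh-quotient criterion. The paper forms the matrix $\widehat{X}$ with columns $\sqrt{D_{u,v}}(x_u - x_v)$ and applies the column-selection algorithm of \cite{gs11-svd} to pick $r' = O(r/\eps)$ columns $\mathcal{S}$ minimizing the \emph{global} Frobenius error $\|\widehat{X}_{\mathcal{S}}^\perp \widehat{X}\|_F^2$; the connection to generalized eigenvalues arrives only afterwards, via von Neumann's trace inequality applied to $\widehat{X}^T\widehat{X}$ against (a conjugate of) $L_D^\dagger L_C L_D^\dagger$, yielding $\sum_{j>r}\sigma_j(\widehat{X}^T\widehat{X}) \le \|\widehat{X}\|_F^2 \cdot \Phi^{\mathrm{SDP}}/\lambda_{r+1}$. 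Second, the $\ell_1$-embedding is not built from \emph{conditioned} vectors that ``live outside the top-$r$ eigenspace.'' With $\widetilde{\mathcal{S}}$ the seed-edge endpoints, the embedding is simply $y_u(f) = \langle x_{\widetilde{\mathcal{S}}}(f), x_u\rangle$ using the \emph{original} higher-order Lasserre vectors, and the entire distortion analysis rests on two concrete inequalities: $\|y_u - y_v\|_1 \le \|x_u - x_v\|^2$ (from Lasserre consistency) and $\|y_u - y_v\|_1 \ge \|P_{\mathcal{S}}(x_u - x_v)\|^2$. The lower bound holds because $\sum_f \overline{x_{\widetilde{\mathcal{S}}}(f)}\,\overline{x_{\widetilde{\mathcal{S}}}(f)}^{\,T}$ is itself a projection matrix and dominates $P_{\mathcal{S}}$, since every seed-edge difference $x_u - x_v$ lies in $\mathrm{span}\{x_{\widetilde{\mathcal{S}}}(f)\}_f$. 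Plugging these two bounds into the $\ell_1$-to-cut lemma gives $\Phi_T \le \Phi^{\mathrm{SDP}}\bigl(1 - \|\widehat{X}_{\mathcal{S}}^\perp \widehat{X}\|_F^2/\|\widehat{X}\|_F^2\bigr)^{-1}$, and the column-selection plus trace-inequality steps convert this to the stated form; your parameter choice then finishes as you wrote. Your ``spillover into the low eigenspace'' picture is directionally right, but the operative mechanism is this projection-domination inequality rather than any post-conditioning eigenspace localization.
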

Our actual approximation guarantee is stronger and offers a trade-off:
for any $\delta \in (0,1)$ we can find a $\frac{1.01}{\delta}$
approximation to $\Phi^\ast$ in $\exp(O(r/\delta)) n^{O(1)}$ time
provided $\lambda_r \ge \Phi^\ast/(1-\delta)$. The formal result is
stated in~\Cref{cor:final-bnd} (the above follows as a corollary with
suitable choice of parameters).
We can also get similar results for various expansion problems such as
normalized cut, edge expansion and conductance using the same
algorithm.
\ngap
\subsection{Previous approximation algorithms for sparsest cut}
\label{sec:related-work}
As the {\sc (Uniform) Sparsest Cut} problem and closely related
variants (such as edge expansion and conductance) are all NP-hard in
general, theoretically much effort has gone into the design of good
approximation algorithm for the problem.

For \usc\ problem, the hard direction of Cheeger's inequality shows
one can ``round" the eigenvector corresponding to $\lambda_1$ to a cut
$T$ satisfying $\Phi_U \le \sqrt{8 d_{\max} \lambda_1(G)}$ where
$d_{\max}$ is the maximum degree\footnote{Cheeger's inequality is
  usually stated in terms of the second eigenvalue of graph Laplacian
  matrix, which is equal to the smallest generalized eigenvalue,
  $\lambda_1$.}.
%
This gives $O(\sqrt{d_{\max}/\Phi^\ast(G)}) \le
O(\sqrt{d_{\max}/\lambda_1(G)})$ approximation which is good for
moderate values of $\Phi^\ast$ for the case of \usc. To the best of
our knowledge, no analogue of this result is known for \nusc.

For smaller values of $\Phi^\ast$, the best approximation for \nusc\
is based on solving a convex relaxation of the problem, and then
rounding the solution to a cut. Using linear programming (LP), in a
seminal work, Leighton and Rao~\cite{LR} gave a factor $O(\log n)$
approximation for \nusc\ (here $n$ denotes the number of
vertices). Beautiful connections of approximating sparsest cut to
embeddings of metric spaces into the $\ell_1$-metric were later
discovered in \cite{llr,ar}.  Using a semi-definite programming (SDP)
relaxation, the approximation ratio was improved to $O(\sqrt{\log n})$
for \usc \ in the breakthrough work \cite{ARV}. For \nusc, using
$\ell_1$ embeddings of negative type metrics, an approximation factor
of $O(\log^{3/4} n)$ was obtained in \cite{CGR} and a factor
${O}(\sqrt{\log n} \log\log n)$, nearly matching the \usc \ case, was
obtained in \cite{ALN}.

Recently, higher order eigenvalues were used to approximate many graph
partitioning problems. In \cite{gs11-qip}, we gave an algorithm based
on SDPs from the Lasserre hierarchy achieving an approximation factor
of the form $(1+\eps)/\min\{1,\tilde{\lambda}_r\}$ for problems such
as minimum bisection, small set expansion, etc, where
$\tilde{\lambda}_r$ is the $r$'th smallest eigenvalue of the
normalized Laplacian.  On a similar front, for the \usc\ problem, if
the $r^{th}$ eigenvalue is large relative to expansion, one can
combine the eigenspace enumeration of~\cite{ABS} with a cut
improvement procedure
from~\cite{al08} 
to obtain a constant factor approximation for \usc\ in time $n^{O(1)}
2^{O(r)}$.\footnote{We thank an anonymous reviewer for this
  observation.}  The details of this combination are briefly spelled
out in \Cref{apx:rse}. We will revisit this approach in
\Cref{sec:intro-rse} to show why it does not work for \nusc.

A common theme in this line of work is that one can obtain a constant
factor approximation with running time being a function of how fast
the spectrum grows (both our algorithms in this paper and the ones
in~\cite{gs11-qip} in fact allow approximation schemes).  Put
differently, one can identify a generic condition which highlights
what kind of graphs are easy.

To the best of our knowledge, in the case of \nusc\ with an arbitrary
demand graph, no such results of the above vein are known.  In fact,
we are not aware of the analog of the harder direction of Cheeger's
inequality, let alone spectrum based approximation schemes. In this
paper, we present such an approximation scheme based on the
generalized eigenvalues. 
\ngap
\subsection{Overview of Our Contributions}
\label{sec:our-cont} \label{sec:intro-rse}
In this section, we briefly describe our main contributions in terms
algorithmic tools and techniques over similar algorithms such
as~\cite{gs11-qip}.

\medskip
\paragraph{Main Contributions.}
Our algorithm is based on solving one of the strongest known SDP
relaxations, $r$-rounds of \lh, similar to~\cite{gs11-qip}.  Any
solution for this SDP yields a vector for each $r$-subset of vertices
and each possible labeling of them.  The rounding algorithm
in~\cite{gs11-qip} is based on choosing a set of $r$-nodes, ``seeds'',
then labeling these using the SDP solution. Finally these labels are
``propagated'' to other vertices {\em independently at random}.  Such
rounding is acceptable for constraint satisfaction type problems such
as maximum cut.

Unfortunately for problems such as \nusc, independent rounding is too
``crude'': It tends to break the graph into many disconnected
components, which is rather disastrous for \nusc.

In this paper, we consider a more ``delicate'' rounding based on
thresholding.  Our main contribution is to show how the performance of
such rounding is related to some strong geometrical quantities of
underlying SDP solution, and we show how to bound it using generalized
spectra.

\medskip
\paragraph{Comparison with Subspace Enumeration.} 
One successful technique for designing approximation algorithms based
on higher order spectrum is subspace enumeration~\cite{kolla10,ABS}.
Suppose we have a target set $T$ corresponding to a \usc.  These
techniques rely on the fact that the indicator vector $T$ should have
a large component on the span of small eigenvectors. Thus by
enumerating over the vectors on this subspace using some $\eps$-net,
we can find a set whose symmetric difference with $T$ is
small. Combining this with a cut improvement algorithm due
to~\cite{al08}, one can obtain an approximation algorithm for \usc\
problem with slightly worse approximation factors than ours
(see~\Cref{apx:rse}).

Unfortunately the immediate extension of this approach to \nusc\ by
using the generalized eigenvectors does not work as the generalized
eigenvectors are not {\em orthogonal} in the Euclidean space.
\ngap
\section{Preliminaries}
\label{sec:prelim}
We now formally define the notation and terminology that will be
useful to us in the paper.

\paragraph{Sets.} Let $[m]\triangleq \left\{1,2,\ldots,
  m\right\}$. Given set $A$ and positive integer $k$, we use
$\binom{A}{k}$ (resp. $\binom{A}{\le k}$) to denote the set of all
possible size $k$ (resp. size at most $k$) subsets of $A$.  We use
$\R_+$ to denote the set of non-negative reals.

\paragraph{Euclidean Space.} Given row set $B$, we use $\R^B$ to
denote the set of real vectors where each row (axis) is associated
with an element of $B$.
For any vector $\xvec \in \R^B$, its coordinate at axis $b \in B$ is
denoted by $\xvec(b)$.  Let $\|\xvec\|_p$ be its $p^{th}$ norm with
$\|\xvec\|\triangleq \|\xvec\|_2$, and $\xvec^T$ be its transpose.
Finally for any $\xvec, \yvec\in \R^B$, let $\langle \xvec,
\yvec\rangle = \xvec^T \yvec$ be their inner product $\sum_{b\in B}
\xvec_b \yvec_b$.
 
\paragraph{Matrices.} Given 
row set $B$ and column set $C$, we use $\R^{B,C}$ \footnote{We chose
  this notation over the conventional one ($\R^{B\times C}$) so as to
  prevent ambiguity when the rows, $B$, or columns, $C$, are Cartesian
  products themselves.}  we to denote the set of real matrices whose
rows and columns are associated with elements of $B$ and $C$,
respectively.  Given matrix $\xmat\in \R^{B, C}$, for any $b\in B,
c\in C$, we will use $\xmat_{b,c}\in \R$ to denote entry of $\xmat$ at
row $b$ and column $c$. For convenience, we use $\xmat_c\in \R^{B}$ to
denote the vector corresponding to the column $c$ of $X$. Likewise
given subset of columns of $\xmat$, $S\subseteq C$, we use $\xmat_S\in
\R^{B, S}$ to denote the matrix corresponding to the columns $S$ of
$\xmat$.  Given matrix $\xmat$, we use $\|\xmat\|_F$, $\tr(\xmat)$ and
$\xmat^T$ to denote Frobenius norm of $X$, its trace and transpose.

Finally we use $\xmat^{\proj}$ and $\xmat^{\perp}$ to denote the
projection matrices onto the span of $\xmat$ and its orthogonal
complement.

\paragraph{Positive Semi-Definite (PSD) Ordering.} Given a symmetric
matrix $\xmat \in \R^{A,A}$, we say $\xmat$ is a PSD matrix, denoted
by $\xmat \succeq 0$, iff $\yvec^T \xmat \yvec \ge 0$ for all $\yvec
\in \R^{A}$.

\paragraph{Eigenvalues.} Given symmetric matrix $\xmat\in \R^{A,A}$,
for any integer $i \le |A|$, we define its $i^{th}$ smallest and
largest eigenvalues as the following, respectively:
\[
\lambda_i(X) \triangleq \max_{\rank(\zmat) \le i-1} \min_{\wvec\perp \zmat, 
\wvec \neq 0}
\frac{ \wvec^T \xmat \wvec }{\wvec^T \wvec},\]\quad\quad 
\[\sigma_i(X) \triangleq \min_{\rank(\zmat) \le i-1} \max_{\wvec\perp \zmat, 
\wvec \neq 0}
\frac{ \wvec^T \xmat \wvec }{\wvec^T \wvec}.
\]
\paragraph{Generalized Eigenvalues.} Given two symmetric matrices
$\xmat, \ymat \in \R^{A,A}$ with $\ymat\succeq 0$, for any integer $i
\le \rank(\ymat)$, we define their $i^{th}$ smallest generalized
eigenvalue as the following:
\begin{equation} \label{eq:60210312}
\lambda_i(X,Y) \triangleq \max_{\rank(\zmat) \le i-1} \min_{\wvec\perp \zmat, 
\ymat \wvec \neq 0}
\frac{ \wvec^T \xmat  \wvec }{\wvec^T \ymat \wvec}. 
\end{equation} 
\paragraph{Graphs.} We assume all graphs are simple, undirected and
edge-weighted with non-negative weights. We associate each graph with
its edge weight function of the form $W:\binom{V}{2} \to \R_+$, where
we use $W_{u,v}$ to denote the weight of edge between $u$ and $v$ for
convenience.

\smallskip
\paragraph{Laplacian Matrices.} Given a graph with weights
$W:\binom{V}{2} \to\R_+$, the associated graph Laplacian matrix, $L_W
\in \R^{V,V}$, is defined as the following symmetric matrix:
\[
(L_W)_{a,b} = \begin{dcases*}
		\sum_{c} W_{a,c} & if $a=b$, \\
		- W_{a,b} & if $a\neq b$.
	\end{dcases*}
\] 
For any $\xmat \in \R^{\Upsilon, V}$, it is easy to see that
$\tr\left[ \xmat^T \xmat L_W\right] = \sum_{u<v} W_{u,v} \left\|
  \xmat_u-\xmat_v \right\|^2$, which also implies $L_W \succeq 0$.
\ngap
\subsection{Lasserre Hierarchy}
\label{sec:lasserre-defn}
We present the formal definitions of the \lh\ of SDP
relaxations~\cite{Las02}, tailored to the setting of the problems we
are interested in, where the goal is to assign to each vertex/variable
from $V$ a label from $\{0,1\}$.
\def\dim{\Upsilon}
\begin{definition}[Lasserre vector set]
\label{def:las-sdp}
Given a set of variables $V$ and a positive integer $r$, a collection
of vectors $\xvec$ is said to satisfy $r$-rounds of \lh, denoted by
$\xvec \in \lasserreii{r}{V}$, 
if it satisfies the following conditions:
\begin{enumerate}
\item For each set $S\in\binom{V}{\le r+1}$, there exists a function
  $\xvec_S:\{0,1\}^{S}\to \R^\dim$ that associates a vector of some
  finite dimension $\dim$ with each possible labeling of $S$.  We use
  $\xvec_S(f)$ to denote the vector associated with the labeling $f\in
  \{0,1\}^S$.

  For singletons $u\in V$, we will use $\xvec_u$ and $\xvec_u(1)$
  interchangeably.

  For $f \in \{0,1\}^S$ and $v\in S$, we use $f(v)$ as the label $v$
  receives from $f$.  Also given sets $S$ with labeling $f\in
  \{0,1\}^S$ and $T$ with labeling $g\in\{0,1\}^T$ such that $f$ and
  $g$ agree on $S\cap T$, we use $f\circ g$ to denote the labeling of
  $S\cup T$ consistent with $f$ and $g$: If $u\in S$, $(f\circ g)(u) =
  f(u)$ and vice versa.
\item $\xvec_{\es} \neq 0$.
\item $\langle \xvec_S(f), \xvec_T(g)\rangle = 0$ if there exists
  $u\in S\cap T$ such that $f(u)\neq g(u)$.
\item \label{def:las-sdp:consistent} $\langle \xvec_S(f),
  \xvec_T(g)\rangle = \langle \xvec_A(f'), \xvec_B(g')\rangle$ if
  $S\cup T=A\cup B$ and $f\circ g = f'\circ g'$.
\item For any $u\in V$, $\sum_{j\in \{0,1\}} \|\xvec_u(j)\|^2 =
  \|\xvec_\es\|^2$.
\item (implied by above constraints) For any $S\in\binom{V}{\le r+1}$,
  $u\in S$ and $f\in \{0,1\}^{S\setminus\{u\}}$, $\sum_{g\in
    \{0,1\}^u} \xvec_{S}(f\circ g) = \xvec_{S\setminus\{u\}} (f)$.
\end{enumerate} 
\end{definition}
\ngap
\section{Our Algorithm and Its Analysis}
\begin{program}[h]
  \caption{$T=\textsc{Round}(C,D,\xvec,\sde)$: Seed based rounding in
    time $2^{O(r')} \poly(n)$.  Sparsity of its output is bounded
    in~\Cref{thm:rnd-from-s}.\label{alg:rnd-from-s}}
\begin{inp} 
\item $C, D: \binom{V}{2} \to \R_+$; 
$\xvec \in \lasserreii{2 r' + 2}{V}$
and seed set $\sde \subseteq \binom{V}{2}$ with $|\sde|\le r'$.
\end{inp}
\begin{outp} 
\item A set $T \subset V$ representing an approximation for \nusc\ problem.
\end{outp}
\begin{proc}
\itemsep=-0.5ex
\item $\sdn \gets \{ u \in V \mid 
\exists v: \{u,v\}\in \sde \} \subseteq V$. 
\item For each $f:\sdn \to \{0,1\}$,
\begin{enumerate}[(a)] 
\item Let $p^f: [n] \to V$ be an ordering of $V$ so that
$\langle \xvec_\sdn(f), \xvec_{p^f(1)} \rangle \le 
\ldots
\le \langle \xvec_\sdn(f), \xvec_{p^f(n)} \rangle$.
\item For each $i \in [n]$, 
\[T(f,i) \gets \left\{ p^f(1), p^f(2), \ldots, p^f(i) \right\}.\]
\end{enumerate}
\item $T \gets \argmin_{f:\sdn\to\{0,1\}, i\in [n]} \Phi_{T(f,i)}$.
\end{proc}
\end{program}
\begin{program}[h]
\caption{$\sde=$\textsc{select-seeds}($D,\xvec$): 
Seed selection in time $\poly(n)$. \label{alg:select-s}}
\begin{inp} 
\item $\xvec \in \lasserreii{2 r' + 2}{V}$ and $D: \binom{V}{2} \to \R_+$ as  
the demand graph.
\end{inp}
\begin{outp} 
\item $\sde \subseteq \binom{V}{2}$ with $|\sde|\le r'$ as a set of
  seed edges.
\end{outp}
\begin{proc}
\item Let $\widehat{\xmat} \gets \left[ \sqrt{D_{u,v}} \big( \xvec_u -
    \xvec_v\big)\right]_{\{u,v\}\in \binom{V}{2}}$.
\item Use the column selection algorithm from~\cite{gs11-svd} to
  choose $r'$-columns, $\sde \subseteq \binom{V}{2}$, of matrix
  $\widehat{\xmat}$ and return $\sde$.
\end{proc}
\end{program}
\begin{program}[t]
\caption{$T=$\textsc{Approximate-SC}($C,D,r'$): 
Main algorithm for approximating \nusc. Sparsity of the output
is bounded in~\Cref{cor:final-bnd}.
A na\"ive implementation will run in time $n^{O(r')}$. However 
this algorithm exactly fits into the local rounding framework 
introduced in~\cite{gs12-fast}, therefore we can use the faster solver
from~\cite{gs12-fast} to decrease the running time to $2^{O(r')} \poly(n)$.
\label{alg:vanilla-sc}}
\begin{inp} 
\item $C, D: \binom{V}{2} \to \R_+$ as 
the cost and demand
graphs, respectively.
\end{inp}
\begin{outp} 
\item A set $T \subset V$ representing an approximation for \nusc\
  problem.
\end{outp}
\begin{proc}
\item Compute a (near-)optimal solution, $\xvec$, to the following
  SDP:
\begin{equation}
\label{eq:sc-sdp-alg}
\begin{array}{rll}
  \min & \sum_{u<v} C_{u,v} \| \xvec_u - \xvec_v \|^2 \\
  \st & \sum_{u<v} D_{u,v} \| \xvec_u - \xvec_v \|^2 = 1, \\
  & \| \xvec_{\es}\|^2 > 0,\quad \xvec \in \lasserreii{2 r'+2}{V}.
\end{array}
\end{equation} 
\item $\sde \gets \textsc{select-seeds}(D,\xvec)$ (\Cref{alg:select-s}).
\item $T \gets \textsc{Round}(C,D,\xvec,\sde)$
  (\Cref{alg:rnd-from-s}). Return $T$.
\end{proc}
\end{program}
\label{sec:alg-and-analysis}
The complete algorithm is presented in~\Cref{alg:vanilla-sc}.  It is
based on rounding a certain $r'$-rounds of Lasserre Hierarchy
relaxation for the \nusc\ problem given positive integer $r'$:
\begin{align}
\label{eq:sc-sdp0}
\min \quad& \dfrac{\sum_{u<v} C_{u,v} \| \xvec_u - \xvec_v
  \|^2}{\sum_{u<v} D_{u,v}
  \| \xvec_u - \xvec_v \|^2} \\
\mathrm{st}\quad & \sum_{u<v} D_{u,v}
\| \xvec_u - \xvec_v \|^2 > 0, \notag\\
& \xvec \in \lasserreii{r'}{V},\quad\|\xvec_{\es}\|^2=1. \notag
\end{align} 
It is easy to see that~\cref{eq:sc-sdp0} is indeed a relaxation of
\nusc\ problem.  Even though it is not an SDP problem (it is
quasi-convex), there is an equivalent SDP formulation.
\begin{lemma} \label{lem:sdp-eq}
The following SDP is equivalent to~\cref{eq:sc-sdp0}:
\begin{align}
\label{eq:sc-sdp}
\min \quad&\sum_{u<v} C_{u,v} \| \wvec_u - \wvec_v \|^2 \\
\mathrm{st}\quad & \sum_{u<v} D_{u,v} \| \wvec_u - \wvec_v \|^2 = 1, \notag\\
&\| \wvec_{\es}\|^2 > 0,\quad \wvec \in \lasserreii{r'}{V}.\notag
\end{align} 
\end{lemma}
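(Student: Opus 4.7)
The plan is to exhibit explicit bijections (up to equal objective values) between feasible solutions of the two formulations by rescaling all Lasserre vectors. The key observation is that the Lasserre cone $\lasserreii{r'}{V}$ is closed under multiplication by any positive scalar: every defining condition in \Cref{def:las-sdp} is either (i) a linear equality among inner products $\langle \xvec_S(f), \xvec_T(g)\rangle$, or (ii) the quadratic normalization $\sum_j \|\xvec_u(j)\|^2 = \|\xvec_\es\|^2$, and both are preserved if we replace every $\xvec_S(f)$ by $\alpha \xvec_S(f)$ for $\alpha>0$ (all inner products scale by the same factor $\alpha^2$). Therefore rescaling is the only tool we need.

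First, I would handle the direction $\cref{eq:sc-sdp} \Rightarrow \cref{eq:sc-sdp0}$: given any $\wvec$ feasible for \cref{eq:sc-sdp}, set $\alpha = 1/\|\wvec_\es\|$ (well-defined since $\|\wvec_\es\|^2>0$) and $\xvec := \alpha \wvec$. Then $\xvec \in \lasserreii{r'}{V}$ by the closure observation and $\|\xvec_\es\|^2=1$. Moreover, $\sum_{u<v} D_{u,v}\|\xvec_u-\xvec_v\|^2 = \alpha^2 > 0$, so $\xvec$ is feasible for \cref{eq:sc-sdp0}, and its ratio objective equals
\[
\frac{\sum_{u<v} C_{u,v}\|\xvec_u-\xvec_v\|^2}{\sum_{u<v} D_{u,v}\|\xvec_u-\xvec_v\|^2}
= \frac{\alpha^2\sum_{u<v} C_{u,v}\|\wvec_u-\wvec_v\|^2}{\alpha^2\sum_{u<v} D_{u,v}\|\wvec_u-\wvec_v\|^2}
= \sum_{u<v} C_{u,v}\|\wvec_u-\wvec_v\|^2,
\]
which is exactly the \cref{eq:sc-sdp}-objective of $\wvec$.

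Conversely, for $\cref{eq:sc-sdp0}\Rightarrow \cref{eq:sc-sdp}$, take $\xvec$ feasible for \cref{eq:sc-sdp0} and set $\beta := \bigl(\sum_{u<v} D_{u,v}\|\xvec_u-\xvec_v\|^2\bigr)^{1/2}$, which is strictly positive by feasibility. Define $\wvec := \xvec/\beta$; closure again gives $\wvec \in \lasserreii{r'}{V}$, the denominator becomes $1$, and $\|\wvec_\es\|^2 = 1/\beta^2 > 0$, so $\wvec$ is feasible for \cref{eq:sc-sdp}. Its objective $\sum_{u<v} C_{u,v}\|\wvec_u-\wvec_v\|^2$ equals the ratio objective of $\xvec$ by the same cancellation. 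Combining the two directions shows that the optimal values coincide and that optima map to optima.

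The only step requiring care is the closure-under-scaling observation, and it is immediate from \Cref{def:las-sdp}; no nontrivial obstacle arises. I would state the closure as a one-sentence sublemma (or inline remark) and then present the two rescalings above, each verified in one line against conditions (1)--(6) of \Cref{def:las-sdp} plus the two constraints of the respective program.
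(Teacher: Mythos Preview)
Your proposal is correct and follows essentially the same approach as the paper: both arguments exhibit the equivalence via a global positive rescaling of all Lasserre vectors, with one direction dividing by $\sqrt{\sum_{u<v} D_{u,v}\|\xvec_u-\xvec_v\|^2}$ and the other by $\|\wvec_\es\|$. You present the two directions in the opposite order and make the closure-under-scaling of $\lasserreii{r'}{V}$ a bit more explicit than the paper does, but the substance is identical.
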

\begin{remark}
  The constraint $\|\wvec_{\es}\|^2 > 0$ in~\cref{eq:sc-sdp} is
  redundant, but we included it for the sake of clarity.
\end{remark}
\begin{proof} [Proof of~\Cref{lem:sdp-eq}] Given a feasible solution
  $\xvec$ for formulation \eqref{eq:sc-sdp0}, consider the following
  collection of vectors, $\wvec=[\wvec_T]_{T\in \binom{V}{\le r'}}$.
  For each $T\in \binom{V}{\le r'}$, we define $\wvec_T$ as $\wvec_T
  \triangleq \frac{1}{\sqrt{\sum_{u<v} D_{u,v} \| \xvec_u - \xvec_v
      \|^2}} \xvec_T$.
  It is easy to see that $\sum_{u<v} D_{u,v} \| \wvec_u - \wvec_v
  \|^2=1$ and objective values are equal.
  Finally $\|\wvec_{\es}\|^2 = \frac{1} {\sum_{u<v} D_{u,v} \| \xvec_u
    - \xvec_v \|^2} > 0$ since $0<\sum_{u<v} D_{u,v} \| \xvec_u -
  \xvec_v \|^2 < +\infty$.

  For the other direction of equivalence, suppose $\wvec$ is a
  feasible solution of~\cref{eq:sc-sdp}. For each $T\in \binom{V}{\le
    r'}, f\in \{0,1\}^T$, let $\xvec_T(f) \gets
  \frac{1}{\|\wvec_{\es}\|} \wvec_T(f)$. It is easy to see that the
  objective values are equal. The rest of the proof for $\xvec$ being
  a feasible solution of~\cref{eq:sc-sdp0} follows similarly to the
  previous direction.
\end{proof}
\begin{remark}
  The main components of our rounding,
  \Cref{alg:rnd-from-s,alg:select-s}, are scale invariant; thus the
  formulation given in~\cref{eq:sc-sdp} is sufficient for rounding
  purposes. But we chose to first present~\cref{eq:sc-sdp0} as it is
  more intuitive.
\end{remark}
\ngap
\subsection{Intuition Behind Our Rounding}
\label{sec:int}
For an intuition behind our rounding procedure, presented
in~\Cref{alg:rnd-from-s}, we start with a simple randomized rounding
procedure, which is based on the seed based propagation framework
from~\cite{gs11-qip}. In \cite{gs11-qip}, the different vertices were
rounded independently according to their marginal distribution
(conditioned on a partial assignment to the seed set), whereas here we
do correlated threshold rounding. The algorithm can be easily
derandomized as described at the end of this subsection. In the below
description, the details of the seed selection procedure will be
skipped and deferred to Section \ref{sec:seed-selection}. For now we
develop the algorithm and analyze it assuming some fixed choice of
seed edges.  (We will use the analysis as a guide to make a prudent
choice of seed edges.)

\vspace{0.1in}
\paragraph{Randomized rounding algorithm.} 
Consider the following procedure.
On input $\xvec \in \lasserreii{2 r' +2}{V}$:
\begin{enumerate}
\item Choose a set of $r'$-edges from the demand graph, say $\sde
  \subseteq \binom{V}{2}$ ({\em seed edges}).
\item Let $\sdn$ be the set of their endpoints, $\sdn \gets \{ u \in V
  \mid \mbox{exists $v$ such that $\{u,v\}\in \sde$} \} \subseteq V$.
\item Observe that $|\sdn| \le 2 r'$, hence the values
  $\big\|\xvec_\sdn(f)\big\|^2$ define a probability distribution over
  all labelings of $\sdn$, $f:\sdn \to\{0,1\}$ .  So sample a labeling
  for $\sdn$, $f:\sdn\to\{0,1\}$, with probability
  $\|\xvec_\sdn(f)\|^2$.
\item Choose a threshold $\tau \in [0,1]$ uniformly at random and
  output the following set:
\[
T(f,\tau) \triangleq \left\{ u \in V \bigg| \frac{\langle
    \xvec_\sdn(f), \xvec_u \rangle}{\|\xvec_\sdn(f)\|^2 } \ge \tau
\right\}.
\]
\end{enumerate}
In order for this procedure to make sense, the range of $
\frac{\langle \xvec_\sdn(f), \xvec_u \rangle}{\|\xvec_\sdn(f)\|^2 }$
should be similar to $\tau$'s range. In the following claim, we prove
this.
\begin{claim} \label{clm:tau-ulb}
Provided that $\xvec_\sdn(f) \neq 0$, we have:
\begin{inparaenum}[(i)]
\item \label{it:353-712-1} $0 \le \frac{\langle \xvec_\sdn(f), \xvec_u
    \rangle}{\|\xvec_\sdn(f)\|^2 } \le 1$ for any $u\in V$,
\item \label{it:353-712-2} $\frac{\left|\langle \xvec_\sdn(f), \xvec_u
      - \xvec_v \rangle\right|}{\|\xvec_\sdn(f)\|^2 } \le 1$ for any
  pair $u,v \in V$,
\item \label{it:353-712-3} $\frac{\langle \xvec_\sdn(f), \xvec_u
    \rangle}{\|\xvec_\sdn(f)\|^2} = f(u)$ for any $u \in \sdn$.
\end{inparaenum}
\end{claim}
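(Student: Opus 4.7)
The plan is to prove the three items in reverse order of dependency: first (iii), then (i) which relies on a marginalization trick, and finally (ii) as an immediate corollary of (i).

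For (iii), I would use the Lasserre consistency property (Definition 2.1 part 4) combined with the observation that when $u \in \sdn$, we have $\sdn \cup \{u\} = \sdn$. Specifically, for any $j \in \{0,1\}$, $\langle \xvec_\sdn(f), \xvec_u(j)\rangle$ equals $\langle \xvec_\sdn(f), \xvec_{\sdn \cup \{u\}}(f \circ j_u)\rangle$ if $f$ and $j_u$ agree on $u$, otherwise it is $0$ by the orthogonality property (Definition 2.1 part 3). Taking $j = 1$: if $f(u) = 1$, the pair $f$ and $1_u$ agree on $\{u\} = \sdn \cap \{u\}$ and by consistency the inner product equals $\|\xvec_\sdn(f)\|^2$; if $f(u) = 0$, they disagree at $u$ and the inner product vanishes. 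Dividing by $\|\xvec_\sdn(f)\|^2$ gives $f(u)$ in both cases.

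For (i), if $u \in \sdn$ then (iii) already gives the bound. Otherwise, I would expand $\xvec_u = \xvec_u(1)$ using the marginalization identity (Definition 2.1 part 6 applied to the $\sdn$-variables):
\[
\xvec_u = \sum_{g \in \{0,1\}^\sdn} \xvec_{\sdn \cup \{u\}}(g \circ 1_u).
\]
Taking inner product with $\xvec_\sdn(f)$, the orthogonality property kills every term except $g = f$, leaving
\[
\langle \xvec_\sdn(f), \xvec_u\rangle = \langle \xvec_\sdn(f), \xvec_{\sdn\cup\{u\}}(f \circ 1_u)\rangle = \|\xvec_{\sdn\cup\{u\}}(f \circ 1_u)\|^2,
\]
where the last equality uses consistency (since $\sdn \cup (\sdn \cup \{u\}) = \sdn \cup \{u\}$ and $f \circ (f \circ 1_u) = f \circ 1_u$). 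This is nonnegative, yielding the lower bound. The upper bound follows from marginalization applied the other way: $\|\xvec_\sdn(f)\|^2 = \sum_{j \in \{0,1\}} \|\xvec_{\sdn \cup \{u\}}(f \circ j_u)\|^2 \ge \|\xvec_{\sdn \cup \{u\}}(f \circ 1_u)\|^2$.

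For (ii), the triangle-type reasoning is immediate: by (i) both $\frac{\langle \xvec_\sdn(f), \xvec_u\rangle}{\|\xvec_\sdn(f)\|^2}$ and $\frac{\langle \xvec_\sdn(f), \xvec_v\rangle}{\|\xvec_\sdn(f)\|^2}$ lie in $[0,1]$, so their difference lies in $[-1,1]$ and its absolute value is at most $1$. The main subtlety throughout will be making sure to correctly apply the consistency identity (which requires $S \cup T = A \cup B$ and $f\circ g = f' \circ g'$) when equating $\langle \xvec_\sdn(f), \xvec_{\sdn\cup\{u\}}(f\circ 1_u)\rangle$ with $\|\xvec_{\sdn\cup\{u\}}(f\circ 1_u)\|^2$; beyond that, everything reduces to standard bookkeeping with the Lasserre axioms.
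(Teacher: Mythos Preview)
Your proof is correct and essentially matches the paper's: both derive $\langle \xvec_\sdn(f), \xvec_u\rangle = \|\xvec_{\sdn\cup\{u\}}(f\circ 1)\|^2$ from Lasserre consistency and obtain the upper bound in (i) from the complementary summand $\|\xvec_{\sdn\cup\{u\}}(f\circ 0)\|^2$ (the paper phrases this as $\langle \xvec_\sdn(f), \xvec_u(0)\rangle \ge 0$, you as $\|\xvec_\sdn(f)\|^2 = \sum_j \|\xvec_{\sdn\cup\{u\}}(f\circ j)\|^2$). Your expansion of $\xvec_u$ as a sum over all $g\in\{0,1\}^{\sdn}$ is a slight detour---the paper applies consistency directly to $\langle \xvec_\sdn(f), \xvec_u(1)\rangle$ in one step---but the argument is sound and parts (ii) and (iii) are handled identically.
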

\begin{proof}[of \eqref{it:353-712-1} and \eqref{it:353-712-2}]
  We will only prove \eqref{it:353-712-1}, from which
  \eqref{it:353-712-2} follows immediately.  The lower bound follows
  from $\langle \xvec_\sdn(f), \xvec_u \rangle = \|
  \xvec_{\sdn\cup\{u\}}(f\circ 1) \|^2 \ge 0$. For the upper bound, we
  have:
\begin{align*}
  \langle \xvec_\sdn(f), \xvec_\sdn(f) - \xvec_u \rangle
  =&  \|\xvec_\sdn(f)\|^2 - \langle \xvec_\sdn(f), \xvec_u \rangle \\
  =  & \langle \xvec_\sdn(f), \xvec_{\es}\rangle  
  - \langle \xvec_\sdn(f),\xvec_u \rangle \\
  =&  \langle \xvec_\sdn(f),\xvec_u(0) \rangle \\
  = & \| \xvec_{\sdn\cup\{u\}}(f\circ 0) \|^2\ge 0. \tag*{\qedhere}
\end{align*} 
\end{proof}
\begin{proof}[of \eqref{it:353-712-3}]
  Follows from the fact that $\langle \xvec_\sdn(f),\xvec_u(f(u))
  \rangle = \|\xvec_\sdn(f)\|^2$.
\end{proof}
%
Let's calculate the probability of separating two vertices 
by this procedure.
\begin{claim}\label{clm:sep-prob}
  $\expct{f,\tau}{ \left|\ind{T(f,\tau)}(u)-
      \ind{T(f,\tau)}(v)\right|} = \sum_f \left|\langle
    \xvec_{\sdn}(f), \xvec_u - \xvec_v\rangle\right|.  $
\end{claim}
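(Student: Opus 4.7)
The plan is to compute the expectation by conditioning first on $f$ and then integrating out the threshold $\tau$, using Claim~\ref{clm:tau-ulb} to ensure that the relevant quantities all lie in the interval $[0,1]$ over which $\tau$ is uniform.

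For any fixed $f$ with $\xvec_{\sdn}(f) \neq 0$, set
\[
t_w(f) \triangleq \frac{\langle \xvec_{\sdn}(f), \xvec_w\rangle}{\|\xvec_{\sdn}(f)\|^2}
\qquad (w\in V).
\]
Then $u\in T(f,\tau) \iff t_u(f) \ge \tau$, and hence $\left|\ind{T(f,\tau)}(u)-\ind{T(f,\tau)}(v)\right|=1$ exactly when $\tau$ lies strictly between $t_u(f)$ and $t_v(f)$. By Claim~\ref{clm:tau-ulb}\eqref{it:353-712-1}, both $t_u(f),t_v(f)\in [0,1]$, so since $\tau$ is uniform on $[0,1]$,
\[
\expcts{\tau}{\left|\ind{T(f,\tau)}(u)-\ind{T(f,\tau)}(v)\right|} \;=\; |t_u(f)-t_v(f)| \;=\; \frac{|\langle \xvec_{\sdn}(f),\xvec_u-\xvec_v\rangle|}{\|\xvec_{\sdn}(f)\|^2}.
\]

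Now take the outer expectation over $f$, recalling that $f$ is sampled with probability $\|\xvec_{\sdn}(f)\|^2$. The $\|\xvec_{\sdn}(f)\|^2$ factor in the denominator cancels against the sampling weight, giving
\[
\expcts{f,\tau}{\left|\ind{T(f,\tau)}(u)-\ind{T(f,\tau)}(v)\right|}
\;=\;\sum_f \|\xvec_{\sdn}(f)\|^2\cdot\frac{|\langle \xvec_{\sdn}(f),\xvec_u-\xvec_v\rangle|}{\|\xvec_{\sdn}(f)\|^2}
\;=\; \sum_f |\langle \xvec_{\sdn}(f),\xvec_u-\xvec_v\rangle|,
\]
as desired. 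The only subtlety is the treatment of labelings $f$ with $\xvec_{\sdn}(f)=0$: these contribute nothing to the left-hand side since they are sampled with probability zero, and they contribute nothing to the right-hand side either, since $\langle \xvec_{\sdn}(f),\xvec_u-\xvec_v\rangle=0$ whenever $\xvec_{\sdn}(f)=0$ (e.g.\ by Cauchy--Schwarz). There is no real obstacle here; the claim is essentially a bookkeeping identity, and the role of Claim~\ref{clm:tau-ulb} is precisely to ensure that the uniform $\tau\in[0,1]$ "sees" the full gap between $t_u(f)$ and $t_v(f)$ without truncation.
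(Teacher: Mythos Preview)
Your proof is correct and follows essentially the same approach as the paper: condition on $f$, use Claim~\ref{clm:tau-ulb} to compute the separation probability over the uniform threshold $\tau$ as $\frac{|\langle \xvec_{\sdn}(f),\xvec_u-\xvec_v\rangle|}{\|\xvec_{\sdn}(f)\|^2}$, and then cancel against the sampling weight $\|\xvec_{\sdn}(f)\|^2$. Your treatment is in fact slightly more careful than the paper's, since you explicitly handle the degenerate case $\xvec_{\sdn}(f)=0$.
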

\begin{proof}
  For fixed $f$, by~\Cref{clm:tau-ulb} the probability of separating
  $u$ and $v$ is equal to $ \frac{\left|\langle \xvec_{\sdn}(f),
      \xvec_u - \xvec_v\rangle\right|}{\|\xvec_\sdn(f)\|^2}$.
  Taking expectation over $f$:
\begin{multline*}
  \expct{f,\tau}{ \left|\ind{T(f,\tau)}(u)- \ind{T(f,\tau)}(v)\right|} \hfill\\
  \quad\quad\quad\quad= \sum_f \|\xvec_\sdn(f) \|^2
  \frac{\left|\langle \xvec_{\sdn}(f), \xvec_u -
      \xvec_v\rangle\right|}
  {\|\xvec_\sdn(f)\|^2} \hfill\\
  \quad\quad\quad\quad= \sum_f \left|\langle \xvec_{\sdn}(f), \xvec_u
    - \xvec_v\rangle\right|. \hfill \qedhere
\end{multline*}
\end{proof}

\paragraph{Derandomization.} 
For any fixed $f:\sdn\to\{0,1\}$, there are at most $n$ different
$T(f,\tau)$'s.  Hence instead of choosing $f:\sdn\to \{0,1\}$ and
$\tau\in [0,1]$ randomly, we can perform an exhaustive search over all
possible such sets and output the one with minimum sparsity.  Since
there are at most $n 2^{O(r')}$ many unique $T(f,\tau)$'s, the
exhaustive search can easily be implemented in time $\poly(n)
2^{O(r')}$.  The rounding procedure along with this modification is
presented in~\Cref{alg:rnd-from-s}.
%
%
\ngap
\subsection{Seed Based $\ell_1$-embedding}
Toward analyzing the rounding algorithm of the previous section, we
now define an embedding of the vertices into $\ell_1$, based on the
Lasserre solution (and the chosen seed edges).
\begin{definition}[Seed Based Embedding]
  \label{def:our-embedding}
  Given $\xvec \in \lasserreii{2 r'+2}{V}$ and $\sde \subseteq
  \binom{V}{2}$ with $|\sde|\le r'$, let $\sdn$ be the endpoints of
  edges in $\sde$ so that $\sde \subseteq \binom{\sdn}{2}$. Then we
  define the seed based embedding of $\xvec$ as the following
  collection of vectors.
  For each $u\in V$, $\yvec^\sde_u \in \R^{\{0,1\}^{\sdn}}$ is given
  by
  \[ \yvec^\sde_u \triangleq \bigg[ \langle \xvec_{\sdn}(f),
  \xvec_u\rangle \bigg]_{f:\sdn\to\{0,1\}} \ . \]
\end{definition}
Observe that $\|\yvec^\sde_u - \yvec^\sde_v\|_1$ is equal to the
probability that $u$ and $v$ are separated as shown
in~\Cref{clm:sep-prob}.

It is well known that once we have an $\ell_1$-embedding, we can get a
cut with similar sparsity by choosing the best threshold cut along
each coordinate and this is exactly what we do
in~\Cref{alg:rnd-from-s}.
%
%
%
The following lemma is well-known but for the sake of completeness we
provide a proof.
\begin{lemma}[\cite{llr}]
  \label{lem:l1-to-cut}
  Given a set of vertices $V$, a collection of vectors $\big[
  \yvec_{u} \in \R^{\Upsilon}\big]_{u \in V}$ representing an
  embedding of $V$, the following holds.  For any $C, D: \binom{V}{2}
  \to \R_+$ being the edge weights of graphs $G$ and $H$,
  respectively:
  \begin{equation}
    \min_{\substack{ f \in \Upsilon, \\ \tau \in \R} }
    \Phi_{T(f,\tau)} \le 
    \frac{ \sum_{u<v} C_{u,v} \left\| \yvec_u - \yvec_v \right\|_1 }
    { \sum_{u<v}D_{u,v} \left\| \yvec_u - \yvec_v \right\|_1 }.
    \label{eq:tcut-l1}
  \end{equation} 
  Here $T(f, \tau)\triangleq \left\{ u \in V\big|\ \yvec_u(f) \ge \tau
  \right\}$ represents the threshold cut along coordinate $f\in
  \Upsilon$.
\end{lemma}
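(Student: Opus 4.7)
The plan is to use the standard layer-cake decomposition of the $\ell_1$ distance. For each coordinate $f \in \Upsilon$ and each pair $u,v \in V$, I would write
\[ |\yvec_u(f) - \yvec_v(f)| = \int_{\R} \big|\ind{T(f,\tau)}(u) - \ind{T(f,\tau)}(v)\big| \, d\tau, \]
since the integrand equals $1$ precisely when $\tau$ lies strictly between $\yvec_u(f)$ and $\yvec_v(f)$. Summing over $f \in \Upsilon$ and over pairs $\{u,v\}$ weighted by $C_{u,v}$, and then exchanging the order of summation and integration (the effective support of $\tau$ is bounded, so there is no convergence issue), gives
\[ \sum_{u<v} C_{u,v}\, \|\yvec_u - \yvec_v\|_1 \;=\; \sum_{f \in \Upsilon} \int_{\R} C\big(T(f,\tau),\, V\setminus T(f,\tau)\big) \, d\tau, \]
and the completely analogous identity with $D$ in place of $C$.

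Next, I would invoke the elementary mediant inequality: if $\alpha(f,\tau), \beta(f,\tau) \ge 0$ are measurable functions with $\sum_f \int \beta > 0$, then
\[ \inf_{(f,\tau)\,:\,\beta(f,\tau) > 0} \frac{\alpha(f,\tau)}{\beta(f,\tau)} \;\le\; \frac{\sum_f \int \alpha(f,\tau)\, d\tau}{\sum_f \int \beta(f,\tau)\, d\tau}. \]
Applying this with $\alpha(f,\tau) = C(T(f,\tau), V\setminus T(f,\tau))$ and $\beta(f,\tau) = D(T(f,\tau), V\setminus T(f,\tau))$, the right-hand side above is exactly the right-hand side of~\cref{eq:tcut-l1}. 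Hence there exists a choice $(f^\ast, \tau^\ast)$ whose ratio $\Phi_{T(f^\ast, \tau^\ast)}$ is at most that bound, proving the lemma.

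I do not expect any real obstacle; the only care needed is a measure-theoretic detail, namely that one should restrict the infimum to the set $\{(f,\tau) : D(T(f,\tau), V\setminus T(f,\tau)) > 0\}$. Since the RHS of~\cref{eq:tcut-l1} is finite by hypothesis, the denominator integral $\sum_f \int \beta\, d\tau$ is strictly positive, so this restricted set has positive measure and the mediant inequality applies. Finally, since for each fixed $f$ the sets $T(f,\tau)$ take at most $|V|+1$ distinct values as $\tau$ varies over $\R$, the infimum is attained and can be replaced by a minimum, matching the form in~\cref{eq:tcut-l1}.
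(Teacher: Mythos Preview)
Your proposal is correct and essentially identical to the paper's proof: the paper frames the layer-cake identity probabilistically (sampling $f$ with probability proportional to $\delta_f \triangleq \max_{a,b}|\yvec_a(f)-\yvec_b(f)|$ and $\tau$ uniformly on $[\min_a \yvec_a(f),\max_b \yvec_b(f)]$, so that $\expcts{f,\tau}{|\ind{T(f,\tau)}(u)-\ind{T(f,\tau)}(v)|}=\frac{1}{\Delta}\|\yvec_u-\yvec_v\|_1$), and then applies the same mediant inequality $\min \frac{A}{B}\le \frac{E[A]}{E[B]}$. Your direct integral formulation and the paper's expectation formulation are the same argument up to the normalization factor~$\Delta$.
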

\begin{proof}
  For any $f\in \Upsilon$, let $\delta_f \triangleq \max_{a,b} |
  \yvec_a(f)-\yvec_b(f) | = \max_b \yvec_b(f) - \min_{a} \yvec_a(f)$
  and $\Delta \triangleq \sum_f \delta_f$.
  Consider the following randomized process.  Choose $f \in \Upsilon$
  with probability proportional to $\delta_f$ and then sample a
  threshold $\tau \in_u [\min_{a} \yvec_a(f), \max_b \yvec_b(f)]$.
  Then:
  \begin{align}
    \label{eq:pwq115}
    \expct{f,\tau}{ \left|\ind{T(f,\tau)}(u) -
        \ind{T(f,\tau)}(v)\right| }
    = & \sum_{f\in \Upsilon} \frac{\delta_f}{\Delta} 
    \frac{|\yvec_u(f)-\yvec_v(f)|}{\delta_f}\\
    =& \frac{1}{\Delta} \left\| \yvec_u - \yvec_v\right\|_1. \notag
  \end{align}
  We have
  \begin{align*}
    \min_{\substack{ f \in \Upsilon, \\ \tau \in \R} }
    \Phi_{T(f,\tau)} =& \min_{\substack{ f \in \Upsilon, \\ \tau \in
        \R}} \frac{\sum_{u<v} C_{u,v} \left|\ind{T(f,\tau)}(u) -
        \ind{T(f,\tau)}(v)\right| }{
      \sum_{u<v} D_{u,v} \left|\ind{T(f,\tau)}(u) - \ind{T(f,\tau)}(v)\right|}  \\
    \le & \frac{ \expct{f,\tau}{ \sum_{u<v} C_{u,v}
        \left|\ind{T(f,\tau)}(u) - \ind{T(f,\tau)}(v)\right| } }{
      \expct{f,\tau}{ \sum_{u<v} D_{u,v} \left|\ind{T(f,\tau)}(u) -
          \ind{T(f,\tau)}(v)\right|
      }} \\
    = & \frac{ \sum_{u<v} C_{u,v} \left\| \yvec_u - \yvec_v \right\|_1
    } { \sum_{u<v}D_{u,v} \left\| \yvec_u - \yvec_v \right\|_1 } \quad
    \mbox{ (using \cref{eq:pwq115})} 
  \end{align*}
  thus proving the lemma.
\end{proof}			 

In the rest of this section, we will upper bound the right hand side
of ~\cref{eq:tcut-l1} for our embedding
from~\Cref{def:our-embedding}. To this end, we now obtain upper and
lower bounds on the $\ell_1$-distance $\| \yvec^\sde_u - \yvec^\sde_v
\|_1$ in terms of the SDP vectors.

\begin{claim} \label{clm:l1-ub} $\| \yvec^\sde_u - \yvec^\sde_v \|_1
  \le \left\| \xvec_u - \xvec_v \right\|^2$.
\end{claim}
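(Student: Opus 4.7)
My plan is to work at the level of individual coordinates of the seed embedding. Expanding the $\ell_1$-norm gives
\[
\|\yvec^\sde_u - \yvec^\sde_v\|_1 = \sum_{h:\sdn\to\{0,1\}} \left|\langle \xvec_\sdn(h),\, \xvec_u - \xvec_v\rangle\right|,
\]
so it suffices to dominate this sum by $\|\xvec_u - \xvec_v\|^2$. The first step is to refine each inner product by enlarging the labeled set to include both $u$ and $v$, using property~6 (marginalization) and property~4 (consistency) of the Lasserre vector family from~\Cref{def:las-sdp}. Writing
\[
a(h) = \|\xvec_{\sdn\cup\{u,v\}}(h\circ 1_u \circ 0_v)\|^2,\ b(h) = \|\xvec_{\sdn\cup\{u,v\}}(h\circ 0_u \circ 1_v)\|^2,\ c(h) = \|\xvec_{\sdn\cup\{u,v\}}(h\circ 1_u \circ 1_v)\|^2,
\]
I obtain $\langle \xvec_\sdn(h), \xvec_u\rangle = a(h) + c(h)$ and $\langle \xvec_\sdn(h), \xvec_v\rangle = b(h) + c(h)$, so that $|\langle \xvec_\sdn(h), \xvec_u - \xvec_v\rangle| = |a(h) - b(h)| \le a(h) + b(h)$ by the elementary inequality for non-negative reals.

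The second step is to identify $\sum_h (a(h) + b(h))$ with $\|\xvec_u - \xvec_v\|^2$. Iteratively applying property~6 yields
\[
\xvec_u - \xvec_v = \sum_{f:\sdn\cup\{u,v\}\to\{0,1\}} \big(\ind{f(u)=1} - \ind{f(v)=1}\big)\, \xvec_{\sdn\cup\{u,v\}}(f),
\]
where only labelings with $f(u) \neq f(v)$ contribute a nonzero coefficient. Since property~3 makes the Lasserre vectors associated to distinct labelings of $\sdn \cup \{u,v\}$ pairwise orthogonal, Pythagoras gives
\[
\|\xvec_u - \xvec_v\|^2 = \sum_{\substack{f:\sdn\cup\{u,v\}\to\{0,1\}\\ f(u)\neq f(v)}} \|\xvec_{\sdn\cup\{u,v\}}(f)\|^2 = \sum_h \big(a(h) + b(h)\big)
\]
after regrouping by $h = f|_\sdn$. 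Combining this with the pointwise bound from the previous paragraph finishes the proof.

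The only subtlety I anticipate is the edge case in which $u$ or $v$ already belongs to $\sdn$; then some of $a(h), b(h), c(h)$ involve labelings that are inconsistent with $h$ and therefore, by property~3, correspond to the zero vector. I expect all the identities above to remain valid with the convention that such terms vanish, so this is a small bookkeeping check rather than a genuine obstacle. The substantive content of the proof is the elementary inequality $|a - b| \le a + b$ for non-negative reals; the role of the Lasserre hierarchy is to expose enough orthogonality structure on refined labelings that this pointwise bound integrates to the desired squared-Euclidean-distance inequality.
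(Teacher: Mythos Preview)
Your proposal is correct and follows essentially the same approach as the paper: both decompose $\xvec_u-\xvec_v$ via the labelings $(1,0)$ and $(0,1)$ of $\{u,v\}$, observe that the relevant inner products with $\xvec_\sdn(f)$ are nonnegative squared norms, and apply $|a-b|\le a+b$ termwise. The only cosmetic difference is that you expand all the way to $\sdn\cup\{u,v\}$ and use Pythagoras to sum, whereas the paper collapses the sum over $f$ using $\sum_f \xvec_\sdn(f)=\xvec_\es$; these are equivalent computations.
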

\begin{proof}
  Since $\xvec \in \lasserreii{2 r'+2}{V}$, we can express $\xvec_u$
  and $\xvec_v$ as $\xvec_u = \xvec_{u,v}(10)+\xvec_{u,v}(11)$ and
  $\xvec_v = \xvec_{u,v}(01)+\xvec_{u,v}(11)$ respectively.  Thus
  $\xvec_u - \xvec_v = \xvec_{u,v}(10)- \xvec_{u,v}(01)$.
  Now the following identity follows easily\footnote{ Intuitively, it
    corresponds to the following.  The ``probability'' of $u$ and $v$
    are separated is equal to the probability of $u$ and $v$ being
    labeled with $1$ and $0$ or $0$ and $1$.  }:
  \begin{align}
    \label{eq:7112-306}
    \|\xvec_u - \xvec_v \|^2 = &
    \|\xvec_{u,v}(10)\|^2 + \|\xvec_{u,v}(01)\|^2 \\
    & - 2 \underbrace{\langle \xvec_{u,v}(10),\xvec_{u,v}(01)
      \rangle }_{=0} \notag \\
    =& \|\xvec_{u,v}(10)\|^2 + \|\xvec_{u,v}(01)\|^2. \notag
  \end{align}
  Therefore:
  \begin{align}
    \left|\yvec^\sde_u(f) - \yvec^\sde_v(f) \right|
    = & \left| \langle \xvec_\sdn(f), \xvec_u - \xvec_v \rangle
    \right|
    \notag \\
    = & \left| \langle \xvec_\sdn(f), \xvec_{u,v}(10)
      - \xvec_{u,v}(01) \rangle \right| \notag \\
    \le  & \left|  \langle \xvec_\sdn(f), \xvec_{u,v}(10) \rangle
    \right| 
    \notag\\
    & + \left| \langle \xvec_\sdn(f), \xvec_{u,v}(01) \rangle \right|
    \notag \intertext{For any $g:\{u,v\}\to\{0,1\}$, $\langle
      \xvec_\sdn(f), \xvec_{u,v}(g) \rangle = \big\| \xvec_{\sdn\cup
        \{u,v\}} (f\circ g)\big\|^2 \ge 0$.
      Thus:} = & \langle \xvec_\sdn(f), \xvec_{u,v}(10) \rangle +
    \langle \xvec_\sdn(f), \xvec_{u,v}(01) \rangle
    . \label{eq:42310112} \intertext{Summing~\cref{eq:42310112} over
      $f$ and using the fact that $\xvec_{\es} = \sum_f
      \xvec_{\sdn}(f)$:} \| \yvec^\sde_u - \yvec^\sde_v \|_1 \le &
    \Big\langle \sum_f \xvec_\sdn(f), \xvec_{u,v}(10)
    + \xvec_{u,v}(01) \Big\rangle \notag\\
    =& \langle \xvec_{\es}, \xvec_{u,v}(10) + \xvec_{u,v}(01) \rangle
    \notag \\
    = & \| \xvec_{u,v}(10)\|^2 + \|\xvec_{u,v}(01) \|^2 \notag\\
    = &\| \xvec_u-\xvec_v\|^2 \text{ by~\cref{eq:7112-306}.}
    \tag*{\qedhere}
  \end{align}
\end{proof}
\begin{claim}\label{clm:l1-lb}
  $\| \yvec^\sde_u - \yvec^\sde_v \|_1 \ge \sum_{f:
    \xvec_{\sdn}(f)\neq 0} \langle \unit{\xvec_\sdn(f)}, \xvec_u -
  \xvec_v \rangle^2$ where $\unit{\xvec_\sdn(f)} \triangleq
  \frac{\xvec_\sdn(f)}{\|\xvec_\sdn(f)\|}$ is the unit vector for
  $\xvec_{\sdn}(f)$.
\end{claim}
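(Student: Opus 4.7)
The plan is to prove the inequality term-by-term over the coordinates $f:\sdn\to\{0,1\}$. Unwinding the definition of the embedding from \Cref{def:our-embedding}, the $f$-th coordinate of $\yvec^\sde_u-\yvec^\sde_v$ equals $\langle \xvec_\sdn(f), \xvec_u-\xvec_v\rangle$, so
\[
\|\yvec^\sde_u - \yvec^\sde_v\|_1 \;=\; \sum_{f:\sdn\to\{0,1\}} \bigl|\langle \xvec_\sdn(f), \xvec_u-\xvec_v\rangle\bigr|.
\]
For any $f$ with $\xvec_\sdn(f)=0$, both the corresponding summand on the left and the corresponding term $\langle \unit{\xvec_\sdn(f)}, \xvec_u-\xvec_v\rangle^2$ on the right vanish (the latter is defined to be skipped), so it suffices to restrict the sum to those $f$ with $\xvec_\sdn(f)\neq 0$ and establish the pointwise bound
\[
\bigl|\langle \xvec_\sdn(f), \xvec_u-\xvec_v\rangle\bigr| \;\ge\; \frac{\langle \xvec_\sdn(f), \xvec_u-\xvec_v\rangle^2}{\|\xvec_\sdn(f)\|^2}
\]
for each such $f$.

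Next, I would rewrite this pointwise bound, after canceling one factor of $|\langle \xvec_\sdn(f), \xvec_u-\xvec_v\rangle|$ (which is harmless since if it is zero the inequality is trivial), as the equivalent statement
\[
\frac{\bigl|\langle \xvec_\sdn(f), \xvec_u-\xvec_v\rangle\bigr|}{\|\xvec_\sdn(f)\|^2} \;\le\; 1.
\]
But this is exactly part \eqref{it:353-712-2} of \Cref{clm:tau-ulb}, which has already been established using the Lasserre consistency relations. Summing the pointwise inequality over all $f$ with $\xvec_\sdn(f)\neq 0$ and recalling $\unit{\xvec_\sdn(f)}=\xvec_\sdn(f)/\|\xvec_\sdn(f)\|$ yields the claim.

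There is no real obstacle here: the entire content of the lemma is packaged inside \Cref{clm:tau-ulb}\eqref{it:353-712-2}. The only thing worth being careful about is the degenerate case $\xvec_\sdn(f)=0$, which is handled automatically since then the inner product $\langle \xvec_\sdn(f),\xvec_u-\xvec_v\rangle$ also vanishes, making both sides contribute zero to their respective sums.
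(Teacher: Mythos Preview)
Your proof is correct and follows essentially the same approach as the paper: both argue term-by-term, invoking \Cref{clm:tau-ulb}\eqref{it:353-712-2} to get $\frac{|\langle \xvec_\sdn(f),\xvec_u-\xvec_v\rangle|}{\|\xvec_\sdn(f)\|^2}\le 1$ and hence $|\langle \xvec_\sdn(f),\xvec_u-\xvec_v\rangle|\ge \langle \unit{\xvec_\sdn(f)},\xvec_u-\xvec_v\rangle^2$, then sum over $f$ with $\xvec_\sdn(f)\neq 0$. The only cosmetic difference is that the paper phrases the key step as ``$0\le t\le 1$ implies $t\ge t^2$'' rather than canceling a factor, but the content is identical.
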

\begin{proof}
For any $f: \xvec_{\sdn}(f)\neq 0$, 
by~\Cref{clm:tau-ulb}, $0\le \frac { \left|\langle {\xvec_\sdn(f)},
    \xvec_u - \xvec_v \rangle \right|} { \|\xvec_{\sdn}(f)\|^2 } \le
1$ thus
$\frac { \left|\langle {\xvec_\sdn(f)}, \xvec_u - \xvec_v \rangle
  \right|} { \|\xvec_{\sdn}(f)\|^2 } \ge \left( \frac { \langle
    {\xvec_\sdn(f)}, \xvec_u - \xvec_v \rangle } {
    \|\xvec_{\sdn}(f)\|^2 } \right)^2$.  
Multiplying both sides with
$\|\xvec_{\sdn}(f)\|^2>0$, we obtain
\[ \left|\langle {\xvec_\sdn(f)}, \xvec_u - \xvec_v \rangle \right|
\ge \frac{\langle {\xvec_\sdn(f)}, \xvec_u - \xvec_v \rangle^2}{
  \|\xvec_{\sdn}(f)\|^2} \ . \]
Summing over all $f: \xvec_{\sdn}(f)\neq 0$, we obtain the desired lower
bound,
\begin{align*}
  \|\yvec^\sde_u - \yvec^\sde_v\|_1 = &
  \sum_f \left|\langle {\xvec_\sdn(f)}, \xvec_u - \xvec_v \rangle \right| \\
  \ge& \sum_{f:\xvec_{\sdn}(f)\neq 0} \frac{\langle {\xvec_\sdn(f)},
    \xvec_u - \xvec_v \rangle^2}{ \|\xvec_{\sdn}(f)\|^2}.
  \tag*{\qedhere}
\end{align*}
\end{proof}
In its current form, our lower bound is not very useful as it involves
the {\em higher order} vectors ($\xvec_{\sdn}(f)$'s) from our
relaxation.  Unfortunately these vectors are very hard to reason
about: We do not have any direct handle on them.  Therefore our goal
is to relate this expression to some other expression that only
involves the vectors for edges ($\xvec_u-\xvec_v$'s). We first
introduce some notation.
\begin{notation}
  Let $ \Pi_\sdn \triangleq \sum_{f: \xvec_\sdn(f)\neq 0}
  \unit{\xvec_\sdn(f)} \cdot \unit{\xvec_\sdn(f)}^T$.
\end{notation}
We can rewrite the lower bound from~\Cref{clm:l1-lb} 
in terms of $\Pi_{\sdn}$ as follows:
\begin{align}
  \sum_f & \langle \unit{\xvec_\sdn(f)}, \xvec_u - \xvec_v \rangle^2 \notag \\
  = & (\xvec_u - \xvec_v)^T \sum_f \unit{\xvec_\sdn(f)} \cdot
  \unit{\xvec_\sdn(f)}^T
  (\xvec_u - \xvec_v) \notag \\
  =& (\xvec_u - \xvec_v)^T \Pi_{\sdn} (\xvec_u - \xvec_v).
\label{eq:601-7112}
\end{align}
As observed in~\cite{gs11-qip}, 
$\Pi_{\sdn}$ has a special structure --- it is a projection matrix
onto the span of vectors $\{\xvec_{\sdn}(f)\}_f$. 
\begin{proposition}\label{lem:pi-is-proj}
  $\Pi_\sdn^2 = \Pi_\sdn$, i.e. $\Pi_\sdn$ is a projection matrix onto
  the span of vectors in $\{\xvec_\sdn(f)\}$.
\end{proposition}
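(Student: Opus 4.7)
The plan is to exploit orthogonality of the Lasserre vectors $\{\xvec_\sdn(f)\}_{f:\sdn\to\{0,1\}}$ across distinct labelings, which immediately reduces $\Pi_\sdn$ to a sum of outer products of \emph{orthonormal} vectors, a standard form for an orthogonal projector.

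The first step is to invoke condition (3) of \Cref{def:las-sdp}: for any two labelings $f \neq g$ of $\sdn$, there exists some $u \in \sdn$ with $f(u) \neq g(u)$, so $\langle \xvec_\sdn(f), \xvec_\sdn(g)\rangle = 0$. Consequently, the family $\bigl\{\unit{\xvec_\sdn(f)} : \xvec_\sdn(f)\neq 0\bigr\}$ is an orthonormal set in the ambient Euclidean space.

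The second step is the direct computation
\[
\Pi_\sdn^2 = \sum_{f,g\,:\,\xvec_\sdn(f),\xvec_\sdn(g)\neq 0}
\unit{\xvec_\sdn(f)}\,\bigl\langle \unit{\xvec_\sdn(f)},\unit{\xvec_\sdn(g)}\bigr\rangle\,\unit{\xvec_\sdn(g)}^T .
\]
By the orthonormality just established, the inner product vanishes unless $f=g$ (in which case it equals $1$), so only the diagonal terms survive, giving $\Pi_\sdn^2 = \sum_{f:\xvec_\sdn(f)\neq 0} \unit{\xvec_\sdn(f)}\,\unit{\xvec_\sdn(f)}^T = \Pi_\sdn$. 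Since $\Pi_\sdn$ is manifestly symmetric, this establishes that it is an orthogonal projection.

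Finally, to identify its image with $\spn\{\xvec_\sdn(f)\}_f$, I would note that every summand has column span contained in this subspace, and conversely for any $h$ with $\xvec_\sdn(h)\neq 0$ the orthonormality yields $\Pi_\sdn\,\unit{\xvec_\sdn(h)} = \unit{\xvec_\sdn(h)}$, so each generator of the span is fixed. Hence the image is exactly this span. There is no real obstacle here; the only content is recognizing that the Lasserre orthogonality condition turns the sum into a textbook spectral resolution of a projector.
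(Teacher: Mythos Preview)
Your proof is correct and follows essentially the same approach as the paper: both use the Lasserre orthogonality condition to see that $\langle \unit{\xvec_\sdn(f)},\unit{\xvec_\sdn(g)}\rangle$ equals $1$ if $f=g$ and $0$ otherwise, and then compute $\Pi_\sdn^2$ directly from this. Your version is slightly more explicit in citing condition~(3) of \Cref{def:las-sdp} and in verifying that the image is exactly $\spn\{\xvec_\sdn(f)\}_f$, but the argument is the same.
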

\begin{proof}
Observe that 
$\langle \unit{\xvec_\sdn(f)}, \unit{\xvec_\sdn(g)}\rangle  = \begin{dcases*}
1 & if $f=g$, \\
0 & else
\end{dcases*}$. 
Then we have:
\begin{align*}
\Pi_\sdn^2 
= & \sum_{f, g} \langle \unit{\xvec_\sdn(f)},
\unit{\xvec_\sdn(g)}\rangle
\unit{\xvec_\sdn(f)}\cdot  \unit{\xvec_\sdn(g)}^T\\
= & \sum_f \unit{\xvec_\sdn(f)}\cdot \unit{\xvec_\sdn(f)}^T =
\Pi_\sdn.  \tag*{\qedhere}
 \end{align*} 
\end{proof}
For each seed edge $\{u,v\} \in \sde$,
$\xvec_u - \xvec_v \in \spn\left\{ \xvec_\sdn(f)\right\}$.
This means we can lower bound the matrix $\Pi_{\sdn}$ in terms of the
projection matrix onto the span of vectors corresponding to seed
edges!
%
%
\begin{notation}
\label{not:proj-mtx}
Let $P_{\sde}$ be the projection matrix onto the span of $\{\xvec_u -
\xvec_v \}_{\{u,v\}\in \sde}$. Similarly let $P_{\sde}^{\perp}$ be
projection matrix onto the orthogonal complement of $\{\xvec_u -
\xvec_v \}_{\{u,v\}\in \sde}$, i.e., $P_{\sde}^{\perp} = I -
P_{\sde}$. Here $I$ is the identity matrix of appropriate dimension.
\end{notation}
The final ingredient in our embedding is to lower bound the $\ell_1$
distance.
\begin{lemma} \label{lem:l1-lb} $ \| \yvec^\sde_u - \yvec^\sde_v \|_1
  \ge \left\| P_{\sde} (\xvec_u - \xvec_v)\right\|^2 = \left\| \xvec_u
    - \xvec_v\right\|^2 - \left\| P_{\sde}^\perp (\xvec_u -
    \xvec_v)\right\|^2 $.
\end{lemma}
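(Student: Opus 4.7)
The plan is to chain together Claim (l1-lb) and equation (601-7112) to get the matrix inequality
\[
\|\yvec^\sde_u - \yvec^\sde_v\|_1 \;\ge\; (\xvec_u - \xvec_v)^T \Pi_\sdn (\xvec_u - \xvec_v),
\]
and then replace $\Pi_\sdn$ with $P_\sde$ in a controlled way. The central claim I need is the PSD inequality $\Pi_\sdn \succeq P_\sde$. Once this is in hand, the first inequality of the lemma follows immediately by quadratic-form monotonicity, and the asserted equality $\|P_\sde(\xvec_u - \xvec_v)\|^2 = \|\xvec_u - \xvec_v\|^2 - \|P_\sde^\perp(\xvec_u - \xvec_v)\|^2$ is just the Pythagorean identity applied to the orthogonal decomposition $I = P_\sde + P_\sde^\perp$.

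To prove $\Pi_\sdn \succeq P_\sde$, I will use that both are orthogonal projections (Proposition lem:pi-is-proj for $\Pi_\sdn$, by definition for $P_\sde$), and recall that for orthogonal projections it suffices to show the containment of ranges $\mathrm{range}(P_\sde) \subseteq \mathrm{range}(\Pi_\sdn)$. Since $\mathrm{range}(P_\sde)$ is spanned by the seed differences $\{\xvec_u - \xvec_v\}_{\{u,v\}\in \sde}$, it suffices to show that each such difference lies in $\spn\{\xvec_\sdn(f)\}_f$.

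This is where the Lasserre marginalization property will do the real work, and it is the one step that needs care. For every seed edge $\{u,v\} \in \sde$ we have $\{u,v\} \subseteq \sdn$ by construction of $\sdn$. Iterating the marginalization identity from item~6 of Definition~\ref{def:las-sdp} over all coordinates in $\sdn \setminus \{u,v\}$ gives, for each $g : \{u,v\} \to \{0,1\}$,
\[
\xvec_{u,v}(g) \;=\; \sum_{h : \sdn \setminus \{u,v\} \to \{0,1\}} \xvec_\sdn(g \circ h),
\]
so $\xvec_{u,v}(g) \in \spn\{\xvec_\sdn(f)\}_f$. Writing $\xvec_u = \xvec_{u,v}(10) + \xvec_{u,v}(11)$ and $\xvec_v = \xvec_{u,v}(01) + \xvec_{u,v}(11)$, it follows that $\xvec_u - \xvec_v$ lies in the same span, giving the desired containment of ranges.

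Putting the pieces together: the range containment implies $\Pi_\sdn - P_\sde \succeq 0$, so
\[
(\xvec_u - \xvec_v)^T \Pi_\sdn (\xvec_u - \xvec_v) \;\ge\; (\xvec_u - \xvec_v)^T P_\sde (\xvec_u - \xvec_v) \;=\; \|P_\sde (\xvec_u - \xvec_v)\|^2,
\]
where the final equality uses $P_\sde = P_\sde^2 = P_\sde^T P_\sde$. Combining with the opening chain yields the first bound in the lemma, and the Pythagorean rewriting finishes the proof. The only genuinely non-routine step is the range-containment argument, which hinges on having enough rounds of Lasserre — namely at least $|\sdn|+2$ rounds — so that the vectors $\xvec_\sdn(f \circ g)$ with $g$ on a pair $\{u,v\}\subseteq \sdn$ exist and satisfy the marginalization identity; the hypothesis $\xvec \in \lasserreii{2r'+2}{V}$ and $|\sdn|\le 2r'$ supplies exactly this.
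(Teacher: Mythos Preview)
Your proof is correct and follows essentially the same route as the paper: both combine \Cref{clm:l1-lb} with \cref{eq:601-7112}, then establish $\Pi_\sdn \succeq P_\sde$ via the range containment $\spn\{\xvec_u-\xvec_v\}_{\{u,v\}\in\sde}\subseteq\spn\{\xvec_\sdn(f)\}_f$. The only cosmetic difference is that the paper marginalizes all the way down to singletons, writing $\xvec_u=\sum_{f:f(u)=1}\xvec_\sdn(f)$ directly for $u\in\sdn$, whereas you pass through the pair vectors $\xvec_{u,v}(g)$ first; both arguments are the same iterated use of item~6 in \Cref{def:las-sdp}.
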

\begin{proof}
  From~\Cref{clm:l1-lb} and~\cref{eq:601-7112} we see that $
  \|\yvec^\sde_u - \yvec^\sde_v\|_1 \ge (\xvec_u - \xvec_v)^T
  \Pi_{\sdn} (\xvec_u - \xvec_v).  $ For any $u\in \sdn$, $\xvec_u =
  \sum_{f:f(u)=1} \xvec_\sdn(f)$ hence $\xvec_u \in
  \spn\{\xvec_\sdn(f)\}$. In particular, for any pair $u,v\in \sdn$:
  $\xvec_u - \xvec_v \in \spn\{\xvec_\sdn(f)\}$, which means:
  \begin{align*}
    \spn \left\{ \xvec_u - \xvec_v \right\}_{\{u,v\} \in \sde}
    \subseteq & \spn \left\{ \xvec_u - \xvec_v  \right\}_{u,v \in \sdn} \\
    \subseteq & \spn \left\{ \xvec_{\sdn}(f) \right\} \\
    \implies \Pi_{\sdn}\succeq P_{\sde} = & P_{\sde}^2.
  \end{align*}
  Consequently, $(\xvec_u - \xvec_v)^T \Pi_{\sdn} (\xvec_u - \xvec_v)
  \ge (\xvec_u - \xvec_v)^T P_{\sde}^2 (\xvec_u - \xvec_v) = \left\|
    P_{\sde} (\xvec_u - \xvec_v) \right\|^2$. \qedhere
\end{proof}
We wrap up this section with the following theorem which bounds the
approximation factor of our \Cref{alg:rnd-from-s} in terms of the SDP
vectors $x_u$ and the projection matrix $P_{\sde}^\perp$ corresponding
to the seed edges $\sde$.
\newcommand{\scsdp}{\Phi^{\mathrm{SDP}}}
\begin{theorem}\label{thm:rnd-from-s}
  Given $\xvec \in \lasserreii{r'}{V}$ and a set of seed edges $\sde
  \subseteq \binom{V}{2}$ with projection matrices $P_{\sde},
  P_{\sde}^{\perp}$ as in~\Cref{not:proj-mtx}; let $T\subset V$ be the
  set returned by \Cref{alg:rnd-from-s}.
  Then the sparsity $\Phi_T$ of $T$ is bounded by:
  \begin{equation}
    \Phi_T \le 
    \scsdp \cdot \left(
      1 - \frac{\sum_{u<v} D_{u,v} \|P_{\sde}^{\perp} (\xvec_u- \xvec_v)\|^2}
      {\sum_{u<v} D_{u,v} \|\xvec_u - \xvec_v\|^2}
    \right)^{-1}
    \label{eq:rnd-bnd}
  \end{equation} where
  $\scsdp \triangleq 
   \frac{\sum_{u<v} C_{u,v} \|\xvec_u - \xvec_v\|^2 } 
        {\sum_{u<v} D_{u,v} \|\xvec_u - \xvec_v\|^2 }.$
\end{theorem}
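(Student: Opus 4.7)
The plan is to reduce the theorem to the three ingredients developed earlier in the section: the threshold-rounding lemma \Cref{lem:l1-to-cut} applied to the seed based embedding $\yvec^\sde$, the $\ell_1$ upper bound \Cref{clm:l1-ub}, and the $\ell_1$ lower bound \Cref{lem:l1-lb}. Together these three facts essentially write the answer.

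First I would observe that \Cref{alg:rnd-from-s} is exactly performing the best threshold cut over all coordinates of the embedding $\yvec^\sde$ from \Cref{def:our-embedding}: for each labeling $f:\sdn\to\{0,1\}$ the algorithm sorts the vertices by $\langle \xvec_\sdn(f),\xvec_u\rangle = \yvec^\sde_u(f)$, considers every prefix (equivalently, every threshold $\tau$ along coordinate $f$), and takes the sparsest one. Therefore, applying \Cref{lem:l1-to-cut} with the embedding $\yvec^\sde$ gives
\[
\Phi_T \le \frac{\sum_{u<v} C_{u,v}\,\|\yvec^\sde_u - \yvec^\sde_v\|_1}{\sum_{u<v} D_{u,v}\,\|\yvec^\sde_u - \yvec^\sde_v\|_1}.
\]

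Next I would bound the numerator and denominator separately. For the numerator, \Cref{clm:l1-ub} already gives $\|\yvec^\sde_u-\yvec^\sde_v\|_1 \le \|\xvec_u-\xvec_v\|^2$, and since $C_{u,v}\ge 0$ the numerator is upper-bounded by $\sum_{u<v} C_{u,v}\|\xvec_u-\xvec_v\|^2$. For the denominator, \Cref{lem:l1-lb} gives $\|\yvec^\sde_u-\yvec^\sde_v\|_1 \ge \|\xvec_u-\xvec_v\|^2 - \|P_\sde^\perp(\xvec_u-\xvec_v)\|^2$, and since $D_{u,v}\ge 0$ the denominator is at least $\sum_{u<v} D_{u,v}\,\bigl(\|\xvec_u-\xvec_v\|^2 - \|P_\sde^\perp(\xvec_u-\xvec_v)\|^2\bigr)$.

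Combining the two bounds and then factoring $\sum_{u<v}D_{u,v}\|\xvec_u-\xvec_v\|^2$ out of the denominator rewrites the ratio as $\scsdp$ times $\bigl(1 - \frac{\sum D_{u,v}\|P_\sde^\perp(\xvec_u-\xvec_v)\|^2}{\sum D_{u,v}\|\xvec_u-\xvec_v\|^2}\bigr)^{-1}$, which is precisely the claimed bound in \cref{eq:rnd-bnd}. The only mild subtlety, which is more a sanity check than an obstacle, is verifying that the denominator bound is strictly positive so the manipulation is valid; but the SDP constraint $\sum_{u<v} D_{u,v}\|\xvec_u-\xvec_v\|^2 > 0$ together with $\|P_\sde^\perp(\xvec_u-\xvec_v)\|^2 \le \|\xvec_u-\xvec_v\|^2$ keeps the parenthesized factor in $(0,1]$ (and if it were zero, the theorem is vacuous, since then the right hand side is $+\infty$).
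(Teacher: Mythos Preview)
Your proposal is correct and follows essentially the same route as the paper: apply \Cref{lem:l1-to-cut} to the seed-based embedding $\yvec^\sde$, upper bound the numerator via \Cref{clm:l1-ub}, lower bound the denominator via \Cref{lem:l1-lb}, and factor. Your additional remarks---that \Cref{alg:rnd-from-s} is precisely enumerating threshold cuts along each coordinate of $\yvec^\sde$, and the positivity check on the denominator---are good hygiene that the paper leaves implicit.
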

\begin{proof}
  $\Phi_T \le \frac{\sum_{u<v} C_{u,v} \|\yvec^\sde_u -
    \yvec^\sde_v\|_1} {\sum_{u<v} D_{u,v} \|\yvec^\sde_u -
    \yvec^\sde_v\|_1}$ follows from \Cref{lem:l1-to-cut}.
  \Cref{clm:l1-ub} and \Cref{lem:l1-lb} together imply:
\begin{align*}
  \frac{\sum_{u<v} C_{u,v} \|\yvec^\sde_u - \yvec^\sde_v\|_1}
  {\sum_{u<v} D_{u,v} \|\yvec^\sde_u - \yvec^\sde_v\|_1} \le&
  \frac{\sum_{u<v} C_{u,v} \|\xvec_u - \xvec_v\|^2} { \sum_{u<v}
    D_{u,v} \|\xvec_u - \xvec_v\|^2 - \sum_{u<v} D_{u,v}
    \|P_{\sde}^{\perp} (\xvec_u - \xvec_v)\|^2
  } \\
  = & \scsdp \left( 1 - \frac{\sum_{u<v} D_{u,v} \|P_{\sde}^{\perp}
      (\xvec_u- \xvec_v)\|^2} {\sum_{u<v} D_{u,v} \|\xvec_u -
      \xvec_v\|^2} \right)^{-1} \tag*{\qedhere}.
\end{align*}
\end{proof}
%
\ngap
\subsection{Choosing Seed Edges}
\label{sec:seed-selection}

We now turn to the main missing piece in our algorithm and its
analysis: how to make a good choice of the seed edges $\sde$, and how
to relate the guarantee of \cref{eq:rnd-bnd} for that choice of $\sde$
to the generalized eigenvalues between the Laplacians of the cost and
demand graphs.

\begin{notation} Given $\xvec=[\xvec_T \in \R^\Upsilon]$ and
  $D:\binom{V}{2} \to \R_+$, let 
  $\widehat{\xmat} \in \R^{\Upsilon, \binom{V}{2}}$ be the following
  matrix whose columns are associated with vertex pairs:
  $
  \widehat{\xmat} \triangleq \left[ \sqrt{D_{u,v}} (\xvec_u-\xvec_v)
  \right]_{\{u,v\}\in \binom{V}{2}}$.
  \label{not:xmat}
\end{notation}
Observe that $\left\| \widehat{\xmat} \right\|^2_F = \sum_{u<v}
D_{u,v} \left\| \xvec_u - \xvec_v\right\|^2$.
Since $\sde \subseteq \binom{V}{2}$, the matrix
$\widehat{\xmat}_{\sde}$, consisting of columns of $\xmat$ indexed by
$\sde$, is well defined.  Moreover there is a strong connection
between $\widehat{\xmat}_{\sde}^{\proj}$ and $P_{\sde}$, which we
formalize next:
\begin{claim}
  $P_\sde \succeq {(\widehat{\xmat}_{\sde})}^{\proj}$. Furthermore if
  $\sde \subseteq \supp(D)$ then $P_\sde =
  {(\widehat{\xmat}_{\sde})}^{\proj}$.
\end{claim}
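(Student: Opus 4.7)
The plan is to reduce the claim to the elementary fact that projection matrices respect the subspace order: if $W \subseteq W'$ are linear subspaces (in the same ambient Euclidean space), then the orthogonal projections satisfy $P_{W} \preceq P_{W'}$ in the PSD order. The only content of the statement is a comparison of the two relevant subspaces, namely $\spn\{\xvec_u - \xvec_v\}_{\{u,v\}\in \sde}$ (the range of $P_\sde$) and the column span of $\widehat{\xmat}_\sde = [\sqrt{D_{u,v}}(\xvec_u - \xvec_v)]_{\{u,v\} \in \sde}$ (the range of $(\widehat{\xmat}_\sde)^\proj$).

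First, I would observe that each column of $\widehat{\xmat}_\sde$ is a non-negative scalar multiple of some $\xvec_u - \xvec_v$ with $\{u,v\} \in \sde$, so trivially
\[
\spn(\widehat{\xmat}_\sde) \subseteq \spn\{\xvec_u - \xvec_v\}_{\{u,v\} \in \sde}.
\]
Applying the monotonicity of orthogonal projection with respect to subspace inclusion yields $(\widehat{\xmat}_\sde)^\proj \preceq P_\sde$, which is the first part of the claim.

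For the equality statement, I would verify the reverse inclusion under the hypothesis $\sde \subseteq \supp(D)$. In that case $D_{u,v} > 0$ for every $\{u,v\}\in\sde$, so $\sqrt{D_{u,v}} > 0$ and therefore $\xvec_u - \xvec_v = D_{u,v}^{-1/2}\bigl(\sqrt{D_{u,v}}(\xvec_u - \xvec_v)\bigr)$ lies in $\spn(\widehat{\xmat}_\sde)$. This gives the reverse inclusion of spans, hence equality of the two subspaces, and consequently equality of the corresponding projection matrices. No step is technically delicate — the only thing to be careful about is not accidentally using $\sqrt{D_{u,v}} > 0$ when it need not hold, which is precisely why the equality requires the assumption $\sde \subseteq \supp(D)$ whereas the inequality does not.
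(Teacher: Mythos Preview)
Your proposal is correct and follows exactly the same approach as the paper: compare the column span of $\widehat{\xmat}_\sde$ with $\spn\{\xvec_u-\xvec_v\}_{\{u,v\}\in\sde}$ and use monotonicity of orthogonal projections under subspace inclusion. In fact your argument is more complete than the paper's, which only spells out the containment direction and leaves the equality case under $\sde\subseteq\supp(D)$ implicit.
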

\begin{proof}
  Recall that $\sde\subseteq\binom{V}{2}$ and $P_{\sde}$ represents
  $\spn\{\xvec_u-\xvec_v\}_{\{u,v\}\in \sde}$, which contains every
  column of $\widehat{\xmat}_{\sde} = \left[ \sqrt{D_{u,v}} \left(
      \xvec_u-\xvec_v\right) \right]_{\{u,v\}\in \sde}$.
\end{proof} 

After substituting~\Cref{not:xmat}, the approximation factor in
\Cref{thm:rnd-from-s} becomes
\[ \left(1 - \frac{\|{(\widehat{\xmat}_{\sde})^{\perp}}\widehat{\xmat}
    \|^2_F}{\|\widehat{\xmat}\|^2_F}\right)^{-1} \ .
\]
One way to think about
$\|{(\widehat{\xmat}_{\sde})^{\perp}}\widehat{\xmat} \|^2_F$ is in
terms of column based matrix reconstruction. If we were to express
each column of $\widehat{\xmat}$ as a linear combination of only
$r$-columns of $\widehat{\xmat}$, what is the minimum reconstruction
error (in terms of Frobenius norm) we can achieve?
Without the restriction of choosing only columns, this question
becomes easy to answer: the best error is achieved by the top $r$
singular vectors of $\widehat{\xmat}$ and equals the sum of all but
largest $r$ eigenvalues of the Gram matrix, $\widehat{\xmat}^T
\widehat{\xmat}$. For convenience, we record this in
\Cref{lem:col-res-lb} below.
\begin{notation}
  Let $\sigma_1 \ge \sigma_2 \ge \ldots \ge \sigma_m \ge 0$ be the
  eigenvalues of $\widehat{\xmat}^T \widehat{\xmat}$ in descending
  order.
\end{notation}
\begin{claim}
  \label{lem:col-res-lb}
  For any seed set $\sde\subseteq \binom{V}{2}$ with $|\sde|=r$,
  \vnote{Changed $|\sde|=r-1$ to $r$.}
  $ \|{(\widehat{\xmat}_{\sde})^{\perp}}\widehat{\xmat} \|^2_F \ge
  \sum_{j\ge r+1} \sigma_j $.
\end{claim}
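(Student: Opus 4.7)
The plan is to reduce the claim to the Eckart–Young theorem (optimality of truncated SVD in Frobenius norm), by observing that projecting onto the column span of $\widehat{\xmat}_{\sde}$ produces a rank-$\le r$ approximation of $\widehat{\xmat}$.

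First I would record the simple dimension count: since $\widehat{\xmat}_{\sde}$ has only $|\sde|=r$ columns, its column span has dimension at most $r$, so the projector $(\widehat{\xmat}_{\sde})^{\proj}$ has rank at most $r$. Consequently the matrix $B \triangleq (\widehat{\xmat}_{\sde})^{\proj}\widehat{\xmat}$ has rank at most $r$, because its column space is contained in the column space of $\widehat{\xmat}_{\sde}$.

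Next I would write the residual as $\widehat{\xmat}-B = (I-(\widehat{\xmat}_{\sde})^{\proj})\widehat{\xmat} = (\widehat{\xmat}_{\sde})^{\perp}\widehat{\xmat}$, so the quantity in question is exactly $\|\widehat{\xmat}-B\|_F^2$ for a particular rank-$\le r$ matrix $B$. Applying the Eckart–Young theorem, which states that
\[
\min_{\rank(B)\le r}\|\widehat{\xmat}-B\|_F^2 \;=\; \sum_{j\ge r+1}\sigma_j\,,
\]
where $\sigma_1\ge\sigma_2\ge\cdots$ are the eigenvalues of $\widehat{\xmat}^T\widehat{\xmat}$ (equivalently the squared singular values of $\widehat{\xmat}$), immediately yields the claimed lower bound.

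There is no real obstacle: the statement is a direct instance of the standard lower bound for column-subset-selection-style reconstruction. The only thing to be careful about is matching conventions — the $\sigma_j$ in the notation above are eigenvalues of the Gram matrix $\widehat{\xmat}^T\widehat{\xmat}$, i.e.\ squared singular values of $\widehat{\xmat}$, which is precisely the quantity that appears in the Frobenius-norm version of Eckart–Young. Thus the proof is essentially one line once the rank bound on $(\widehat{\xmat}_{\sde})^{\proj}\widehat{\xmat}$ is noted.
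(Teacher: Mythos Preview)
Your proof is correct and is exactly the standard argument via Eckart--Young. The paper in fact omits the proof of this claim entirely (the authors removed it as routine), so there is nothing to compare against beyond noting that your reduction to the best rank-$r$ approximation is the intended one-liner.
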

%
\aknote{Happily removed the proof :).}
In~\cite{gs11-svd}, it was shown that choosing $\sim \frac{r}{\eps}$
many columns suffice to decrease the error within a $(1+\eps)$-factor
of this lower bound and this is essentially the best possible up to
low order terms.
\begin{theorem}[\cite{gs11-svd}] \label{thm:choose-s} For any positive
  integer $r$ and positive real $\eps$, there exists $r' =
  \left(\frac{r}{\eps}+r-1\right)$ columns of $\widehat{\xmat}$,
  $\sde$, such that
	\[
        \left\| \big(\widehat{\xmat}_{\sde}\big)^{\perp}
          \widehat{\xmat}\right\|^2_F \le \left( 1 + \eps \right)
        \sum_{j\ge r+1} \sigma_j \ .
        \]
  Furthermore there exists an algorithm to find such $\sde$ in
  time $\poly(n)$ (recall $\widehat{X}$ has $\binom{n}{2} =
  O(n^2)$ columns).
\end{theorem}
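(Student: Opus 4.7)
The plan is to prove this column-selection bound via a two-stage randomized procedure — volume sampling followed by adaptive residual sampling — and then derandomize by the method of conditional expectations.

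First, I would set up a volume sampling baseline. Consider the distribution $\mu$ on $r$-subsets of the columns defined by $\mu(S) \propto \det\!\bigl(\widehat{\xmat}_S^T \widehat{\xmat}_S\bigr)$, i.e., the squared volume of the parallelepiped spanned by the columns. Using Cauchy-Binet to expand both numerator and denominator in terms of principal minors of $\widehat{\xmat}^T\widehat{\xmat}$, and noting that the elementary symmetric polynomials of the eigenvalues $\sigma_1 \ge \cdots \ge \sigma_m$ appear naturally, one can prove the classical Deshpande-Rademacher-Vempala-Wang bound
\[
E_{S \sim \mu}\bigl[\|(\widehat{\xmat}_S)^\perp \widehat{\xmat}\|_F^2\bigr] \;\le\; (r+1)\sum_{j \ge r+1}\sigma_j,
\]
giving already an $(r{+}1)$-approximation with only $r$ columns. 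By standard conditional-expectation arguments, the sampling step can be made deterministic.

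Next, I would amplify to a $(1{+}\eps)$-approximation via adaptive sampling. Starting from the seed set $S_0$ of size $r-1$, at each step $t$ pick a new column $c$ with probability proportional to $\|(\widehat{\xmat}_{S_{t-1}})^\perp \widehat{\xmat}_c\|^2$. The key inequality to prove is a one-step contraction of the form
\[
E\!\left[\|(\widehat{\xmat}_{S_t})^\perp \widehat{\xmat}\|_F^2 \,\Bigm|\, S_{t-1}\right] \;\le\; \Bigl(1 - \tfrac{1}{r}\Bigr)\,\|(\widehat{\xmat}_{S_{t-1}})^\perp \widehat{\xmat}\|_F^2 \;+\; \sum_{j \ge r+1}\sigma_j,
\]
obtained by exploiting the fact that the top-$r$ singular directions can absorb at most a $\frac{1}{r}$ fraction of the current residual. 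Iterating this for $\lceil r/\eps \rceil$ steps yields expected residual at most $(1+\eps)\sum_{j \ge r+1}\sigma_j$, and a deterministic column achieving at most the conditional expectation can be chosen at each step in $\poly(n)$ time (using SVD to evaluate the relevant traces).

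Finally, counting the columns: $r-1$ from the volume-sampling phase plus $\lceil r/\eps\rceil$ from the adaptive phase gives exactly $r' = r/\eps + r - 1$ columns as claimed. The main obstacle is sharpening the adaptive step to the precise $1 - 1/r$ contraction rate rather than a looser rate, since this is what gives the tight bound matching known lower bounds; the analysis needs to carefully track how the sum-of-squares error decomposes across the top-$r$ and remaining singular subspaces, and use the fact that after conditioning on $S_{t-1}$, the expected marginal contribution of a new column is governed by the trace of the residual covariance restricted to the top-$r$ directions. The polynomial-time bound follows from the fact that all conditional expectations are rational functions of inner products of the SDP vectors, computable in $\poly(n)$ per step.
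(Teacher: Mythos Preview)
The paper does not prove this theorem; it is quoted verbatim from \cite{gs11-svd} and used as a black box in \Cref{alg:select-s}. So there is no in-paper argument to compare against --- only the proof in the cited reference.

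Your proposal has a real gap in the adaptive-sampling phase. The one-step recursion you wrote,
\[
\expcts{}{\big\|(\widehat{\xmat}_{S_t})^\perp \widehat{\xmat}\big\|_F^2 \;\big|\; S_{t-1}} \;\le\; \Bigl(1-\tfrac{1}{r}\Bigr)\big\|(\widehat{\xmat}_{S_{t-1}})^\perp \widehat{\xmat}\big\|_F^2 \;+\; \sum_{j\ge r+1}\sigma_j,
\]
has fixed point $r\sum_{j\ge r+1}\sigma_j$, not $\sum_{j\ge r+1}\sigma_j$. Iterating it any number of times drives the upper bound toward $r\cdot\mathrm{opt}$ and no further; you cannot extract a $(1+\eps)$ factor from it regardless of how many steps you take. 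For the scheme to work you would need a contraction on the \emph{excess}, i.e.\ something of the form $\expcts{}{\mathrm{new}}-\mathrm{opt}\le(1-1/r)(\mathrm{old}-\mathrm{opt})$, and that is not what length-squared adaptive sampling gives you. (There is also a bookkeeping slip: you describe volume sampling on $r$-subsets but then count $r-1$ columns from that phase.)

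More substantively, I am not aware of any analysis of plain adaptive residual sampling that reaches $(1+\eps)$ Frobenius error with only $r/\eps+O(r)$ columns; the Deshpande--Vempala style bounds are looser. The argument in \cite{gs11-svd} takes a different route: one writes the residual as the tail $\sum_{j\ge r+1}\sigma_j$ plus a term controlled by how well the selected columns ``cover'' the top-$r$ right singular space, and then selects columns via a dual-set sparsification argument in the spirit of Batson--Spielman--Srivastava barrier potentials. The precise count $r/\eps+r-1$ and the $(1+\eps)$ factor fall out of that barrier analysis, not from a geometric contraction of the residual. If you want to reconstruct a proof, that is the line to pursue.
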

Our seed selection procedure is presented in~\Cref{alg:select-s}.  We
bound $\sum_{j\ge r+1} \sigma_j$ in~\Cref{lem:sigma-bnd}. The main
approximation algorithm combining \Cref{alg:rnd-from-s,alg:select-s}
is presented in~\Cref{alg:vanilla-sc} with its analysis
in~\Cref{cor:final-bnd}.
\begin{theorem} \label{lem:sigma-bnd} Let $0\le \lambda_1\le \lambda_2
  \le \ldots \le \lambda_m$ be the generalized eigenvalues of
  Laplacian matrices for the cost and demand graphs.  Then for
  $\widehat{\xmat}$ being the matrix defined in~\Cref{not:xmat}, the
  following bound holds:
\[
\frac { \sum_{j\ge r+1} \sigma_j }{ \big\| \widehat{\xmat} \big\|^2_F
} \le \frac{\scsdp}{\lambda_{r+1}}.
\]
\end{theorem}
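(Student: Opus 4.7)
My plan is to recast both sides of the inequality as traces of products of two symmetric PSD matrices, and then conclude via the minorization form of von Neumann's trace inequality (equivalently, the rearrangement inequality applied to doubly stochastic matrices).

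First, I would set up two symmetric PSD matrices living on the range of $L_D$. Let $M \in \R^{\Upsilon \times V}$ be the matrix whose $u$-th column is $\xvec_u$, and let $N = M^T M$ be the Gram matrix of the SDP vectors. A direct computation gives $\widehat{\xmat}\widehat{\xmat}^T = M L_D M^T$, and since $AB$ and $BA$ always have the same nonzero spectrum, the nonzero $\sigma_j$'s coincide with those of $L_D N$ and hence with those of the symmetric PSD matrix $\tilde N \triangleq L_D^{1/2} N L_D^{1/2}$. On the same subspace define $A \triangleq L_D^{-1/2} L_C L_D^{-1/2}$; substituting $w = L_D^{-1/2} y$ in the definition of generalized eigenvalues shows that the eigenvalues of $A$ in ascending order are precisely $\lambda_1 \le \cdots \le \lambda_m$. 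A cyclic-trace computation then yields
\[
\tr(A\tilde N) = \tr(L_C N) = \tr(M L_C M^T) = \sum_{u<v} C_{u,v}\|\xvec_u - \xvec_v\|^2 = \scsdp \cdot \|\widehat{\xmat}\|_F^2.
\]

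Next I would apply the following inequality to $A$ and $\tilde N$: for symmetric PSD matrices with eigenvalues $\lambda_1 \le \cdots \le \lambda_m$ (ascending) and $\mu_1 \ge \cdots \ge \mu_m$ (descending), one has $\tr(AB) \ge \sum_i \lambda_i \mu_i$. A short proof: expanding $\tr(AB) = \sum_{i,j} \lambda_i \mu_j |\langle u_i, v_j\rangle|^2$ in orthonormal eigenbases, the matrix $P_{ij} := |\langle u_i, v_j\rangle|^2$ is doubly stochastic; by Birkhoff it is a convex combination of permutation matrices, and by the rearrangement inequality each $\sum_i \lambda_i \mu_{\pi(i)}$ is at least $\sum_i \lambda_i \mu_i$ (the oppositely-ordered pairing). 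Setting $\mu_j = \sigma_j$ gives
\[
\tr(A\tilde N) \ge \sum_{j=1}^m \lambda_j \sigma_j \ge \lambda_{r+1} \sum_{j \ge r+1} \sigma_j,
\]
where the last step uses $\lambda_j \ge \lambda_{r+1}$ for $j \ge r+1$ together with $\sigma_j \ge 0$. Combining with the trace identity and dividing by $\lambda_{r+1}\|\widehat{\xmat}\|_F^2$ proves the lemma.

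The one delicate point, and what I expect to be the main bookkeeping nuisance rather than a conceptual obstacle, is the handling of kernels: both $L_C$ and $L_D$ contain $\mathbf{1}$ (and more if the graphs are disconnected), so $L_D^{-1/2}$ must be interpreted as acting on the range of $L_D$ and the identities above verified there. Since the paper's definition of $\lambda_i(L_C, L_D)$ already restricts to $L_D w \neq 0$, this is consistent; alternatively, the entire argument can be rerun in an $L_D$-orthonormal basis of generalized eigenvectors, which simultaneously diagonalizes $L_C$ and $L_D$ on $\mathrm{range}(L_D)$ and bypasses pseudoinverses altogether.
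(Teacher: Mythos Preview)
Your proposal is correct and follows essentially the same approach as the paper: both arguments reduce to applying the von Neumann/Birkhoff trace inequality $\tr(AB)\ge\sum_j\sigma_j(A)\lambda_j(B)$ to a pair of PSD matrices whose spectra are, respectively, the $\sigma_j$'s and the generalized eigenvalues $\lambda_j$. The only cosmetic difference is that you work directly with $L_D^{\pm 1/2}$ on $\mathrm{range}(L_D)$, whereas the paper routes through the SVD of the incidence matrix $B_D$ and the auxiliary matrix $Z=L_D^\dagger L_C L_D^\dagger$; your identification of $A=L_D^{-1/2}L_C L_D^{-1/2}$ is exactly the paper's $P^TB_DZB_D^TP$ in another basis, and the kernel caveat you flag is the same one underlying the paper's step $L_C\succeq (L_D)^\Pi L_C (L_D)^\Pi$.
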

Before proving~\cref{lem:sigma-bnd}, we will begin with stating a
simple lower bound on the trace of matrix products in terms of the
spectra.
\begin{lemma}\label{thm:birkhoff} Given a symmetric matrix $X$ and
  positive semidefinite matrix $Y$:
\[
\tr(X Y) \ge \sum_j \sigma_j(X) \lambda_j(Y)
\] where $\sigma_j(X)$ and $\lambda_j(Y)$ denote the $j^{th}$ largest
eigenvalue of $X$ and the $j^{th}$ smallest eigenvalues of $Y$,
respectively.
\end{lemma}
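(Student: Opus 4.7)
The plan is to reduce the inequality to a combination of two classical facts: Birkhoff's theorem on doubly stochastic matrices (from which the lemma evidently gets its name), and the rearrangement inequality applied to two opposite-ordered sequences.

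First I would spectrally decompose both matrices. Write $X = \sum_i \sigma_i(X)\, u_i u_i^T$ with the eigenvalues listed in decreasing order, and $Y = \sum_j \lambda_j(Y)\, v_j v_j^T$ with the eigenvalues listed in increasing order. Plugging in and expanding,
\begin{equation*}
\tr(XY) \;=\; \sum_{i,j} \sigma_i(X)\,\lambda_j(Y)\,\langle u_i, v_j\rangle^2 \;=\; \sum_{i,j} \sigma_i(X)\,\lambda_j(Y)\, P_{ij},
\end{equation*}
where $P_{ij} \triangleq \langle u_i, v_j\rangle^2$. Since $\{u_i\}$ and $\{v_j\}$ are orthonormal bases, the matrix $P = [P_{ij}]$ is doubly stochastic.

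Next I would invoke Birkhoff--von Neumann to write $P = \sum_\pi c_\pi \Pi_\pi$ as a convex combination of permutation matrices (here $c_\pi \ge 0$ and $\sum_\pi c_\pi = 1$). Substituting,
\begin{equation*}
\tr(XY) \;=\; \sum_\pi c_\pi \sum_{i} \sigma_i(X)\, \lambda_{\pi(i)}(Y).
\end{equation*}
It now suffices to show that for every permutation $\pi$,
\begin{equation*}
\sum_{i} \sigma_i(X)\, \lambda_{\pi(i)}(Y) \;\ge\; \sum_{i} \sigma_i(X)\, \lambda_{i}(Y),
\end{equation*}
since averaging the right-hand inequality with weights $c_\pi$ yields the lemma. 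This is exactly the rearrangement inequality: the sequence $\sigma_1(X) \ge \sigma_2(X) \ge \cdots$ is decreasing and $\lambda_1(Y) \le \lambda_2(Y) \le \cdots$ is increasing, so pairing them index-for-index gives the \emph{minimum} possible value of $\sum_i \sigma_i(X)\,\lambda_{\pi(i)}(Y)$ over all permutations $\pi$.

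The only step requiring any care is the rearrangement inequality itself in the presence of negative values of $\sigma_i(X)$. Since $Y \succeq 0$ guarantees $\lambda_j(Y) \ge 0$, the usual swap-two-entries argument still works: exchanging $\lambda_{\pi(i)}(Y)$ and $\lambda_{\pi(k)}(Y)$ for a pair $i<k$ with $\pi(i)>\pi(k)$ changes the sum by $(\sigma_i(X)-\sigma_k(X))(\lambda_{\pi(k)}(Y)-\lambda_{\pi(i)}(Y)) \le 0$, since the first factor is nonnegative (by the ordering of $\sigma$'s) and the second is nonpositive. Iterating these swaps sorts $\pi$ into the identity and can only decrease the sum, which is precisely the inequality we needed.
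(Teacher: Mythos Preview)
Your proof is correct, but it takes a different route from the paper's. The paper simply invokes von Neumann's trace inequality $\tr(AB)\le\sum_j\sigma_j(A)\sigma_j(B)$ (for symmetric $A,B$, with eigenvalues sorted decreasingly) applied to $A=X$, $B=-Y$, and then negates; this is a two-line argument once the inequality is quoted. Your argument instead unpacks that inequality: you expand $\tr(XY)$ in the two eigenbases, recognize the resulting coefficient matrix $P_{ij}=\langle u_i,v_j\rangle^2$ as doubly stochastic, decompose via Birkhoff--von~Neumann, and finish with the rearrangement inequality. This is in effect a self-contained proof of the von~Neumann trace inequality specialized to the present situation, which is more elementary and explains the label ``Birkhoff'' attached to the lemma. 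One small remark: your last paragraph suggests that $Y\succeq 0$ is needed to make the rearrangement/swap argument go through, but in fact the swap computation $(\sigma_i-\sigma_k)(\lambda_{\pi(k)}-\lambda_{\pi(i)})\le 0$ uses only the ordering of the two sequences, not any sign condition on the $\lambda_j$'s; the hypothesis $Y\succeq 0$ is not actually used anywhere in your argument (nor, for that matter, in the paper's).
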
 
\begin{proof}
  von Neumann's Trace Inequality~\cite{hj-mat-book} states that $\tr(A
  B) \le \sum_j \sigma_j(A) \sigma_j(B)$ for any pair of symmetric
  matrices, $A$ and $B$. This allows us to lower bound $\tr(X Y)$ as
  follows, from which the claim follows immediately:
\begin{align*}
  \tr(X (-Y)) \le & \sum_i \sigma_i(X) \sigma_i(-Y) \\
  =& \sum_i \sigma_i(X) (-\lambda_i(Y)) = - \sum_i \sigma_i(X)
  \lambda_i(Y).  \qedhere
\end{align*} 
\end{proof}
\begin{proof}[Proof of \Cref{lem:sigma-bnd}]
  Since the claimed bound is scale independent, we may assume $\big\|
  \widehat{\xmat} \big\|^2_F=1$ without loss of generality.

  Throughout the proof, we will use the following matrices:
\begin{itemize}
\item $\xmat \triangleq \left[ \xvec_u \right]_{u\in V} \in \R^{\Upsilon, V}$,
\item $B_C \in \R^{\binom{V}{2}, V}$ is the following edge-node
  incidence matrix of the cost graph whose columns and rows are
  associated with vertices and edges, respectively. Its entry at
  column $c\in V$ and row $\{a,b\}\in \binom{V}{2}$ with $a<b$
  (assuming some consistent ordering of $V$) is given by:
\[
\left( B_C \right)_{\{a,b\}, c} \triangleq \sqrt{C_{u,v}}
		\begin{dcases*}
                  1 & if $c = a$, \\
                  -1 & if $c = b$, \\
                  0 & else.
		\end{dcases*}
\]
\item $B_D\in \R^{\binom{V}{2}, V}$ is defined similarly for the
  demand graph, with its rank being $R$.
\item Singular value decomposition of $B_D$ is given by $B_D = P
  \Lambda^{1/2} Q^T$ with $Q^T Q = P^T P = I_{R}$. Here $P$ is an
  orthonormal $\binom{n}{2}$-by-$R$ matrix, $\Lambda$ is an $R$-by-$R$
  positive diagonal matrix and $Q$ is an orthonormal $n$-by-$R$

  matrix. 
\item $L_C, L_D$ are the Laplacian matrices for cost and demand
  graphs, respectively.
\item $(L_D)^{\dagger}$ is the pseudo-inverse of $L_D$ so that
  $L_D^{\dagger} = Q \Lambda^{-1} Q^T$.
\item $Z\triangleq L_D^{\dagger} L_C L_D^{\dagger}$.
\end{itemize}
\vnote{Was the earlier argument, which argued about spectrum of $Z$ to
  capture generalized eigenvalues, wrong?}  \aknote{Nope. But I
  thought this proof is simpler.}
The following identities are trivial:
\[
\widehat{\xmat} = \xmat (B_D)^T;\quad L_C = B_C^T B_C;\quad L_D =
B_D^T B_D = Q \Lambda Q^T.
\] 
Moreover
\[
\left\| \xmat B_C^T \right\|^2_F = \tr\left( \xmat L_C \xmat^T\right)
= \scsdp \tr\left( \xmat L_D \xmat^T\right) = \scsdp
\] by our assumption that $\|\widehat{\xmat}\|_F^2 = \tr\left( \xmat
  L_D \xmat^T\right)=1$.
	
Our goal is to lower bound $\scsdp = \tr(X L_C X^T)$ by $\lambda_{r+1}
\sum_{j \ge r+1} \sigma_j$.  Our approach will be to use
\Cref{thm:birkhoff} to prove this, by identifying a suitable matrix
whose eigenvalues equal the generalized eigenvalues $\lambda_j$.
		
Since $(L_D)^{\Pi}$ is a projection matrix and $L_C \succeq 0$, we
have $L_C \succeq (L_D)^{\Pi} L_C (L_D)^{\Pi}$ .  Substituting the
identity $(L_D)^{\Pi} = L_D (L_D)^{\dagger}= (L_D)^{\dagger} L_D$ into
this lower bound, we have:
\begin{align}
  L_C \succeq & L_D L_D^{\dagger} L_C L_D^{\dagger} L_D  \notag \\
  = & (B_D)^T \Big[B_D \underbrace{L_D^{\dagger} L_C L_D^{\dagger}}_{=
    Z}
  (B_D)^T\Big] B_D. \notag\\
  \implies \scsdp =& \tr\left( \xmat L_C \xmat^T \right) \notag\\
  \ge & \tr\Big\{ \xmat (B_D)^T \left[B_D Z (B_D)^T\right] B_D
  \xmat^T \Big\} \notag \\
  = & \tr\Big\{ \widehat{\xmat} \left[B_D Z (B_D)^T\right]
  \widehat{\xmat}^T\Big\} .\label{eq:44610112}
\end{align}
Since the null space of $\widehat{X} = X (B_D)^T$ contains the null
space of $(B_D)^T$, the row span of $\widehat{X}$ is contained in the
span of $P$. Recall that $P$ is an orthonormal matrix, therefore $P
P^T$ is a {\em projection} matrix onto its column span. Consequently
$\widehat{\xmat} P P^T = \widehat{\xmat}$ and~\cref{eq:44610112} is
equal to: \vnote{It might be good to remind reader of relation between
  SVD and null space and justify $\widehat{\xmat} P P^T =
  \widehat{\xmat}$.}
\begin{equation}
  \tr\Big\{ \widehat{\xmat} P P^T \left[B_D Z (B_D)^T\right] 
  P P^T\widehat{\xmat}^T\Big\}.
\label{eq:44610113}
\end{equation}
If we substitute the lower bound from~\Cref{thm:birkhoff}
into~\cref{eq:44610113}, we see that
\begin{align*}
  \sum_j & \sigma_j (\widehat{\xmat} P P^T \widehat{\xmat}^T) 
  \lambda_j(P^T B_D Z B_D^T P)  \\
  & = \sum_j \sigma_j (\widehat{\xmat} \widehat{\xmat}^T)
  \lambda_j(P^T B_D Z B_D^T P),
\end{align*}
where we used $\sigma_j(M)$, $\lambda_j(N)$ to denote the $j^{th}$
largest and smallest eigenvalues of matrices $M$ and $N$,
respectively.  Observing that $\sigma_j(\widehat{X}
\widehat{X}^T)=\sigma_j(\widehat{X}^T \widehat{X}) = \sigma_j$, we
finally obtain \cref{eq:432}:
\begin{equation}
\label{eq:432}
\scsdp = \tr(X L_C X^T) \ge \lambda_{r+1}(P^T B_D Z B_D^T P) 
\sum_{j\ge r+1} \sigma_j.
\end{equation}
To complete the proof, we need to relate $\lambda_i(P^T B_D Z B_D^T
P)$ to $\lambda_i(L_C, L_D)$.
%
%
\begin{claim} \label{clm:50910112}
$\lambda_j(P^T B_D Z B_D^T P) = \lambda_j(L_C, L_D)$.
\end{claim}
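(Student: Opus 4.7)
The plan is to first simplify the matrix $P^T B_D Z B_D^T P$ using the SVD $B_D = P \Lambda^{1/2} Q^T$ and the identity $L_D^\dagger = Q \Lambda^{-1} Q^T$, and then to match its ordinary eigenvalues to the generalized eigenvalues of $(L_C, L_D)$ via an explicit change of variables.

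First I would just compute: since $P^T P = I_R$, we have
\[
P^T B_D Z B_D^T P \;=\; \Lambda^{1/2} Q^T\, (L_D^\dagger L_C L_D^\dagger)\, Q \Lambda^{1/2},
\]
and substituting $L_D^\dagger = Q \Lambda^{-1} Q^T$ together with $Q^T Q = I_R$ collapses this to
\[
M \;\triangleq\; \Lambda^{-1/2} Q^T L_C Q \Lambda^{-1/2}\in \R^{R,R}.
\]
Thus it suffices to show that the $j$-th largest eigenvalue of $M$ equals $\lambda_j(L_C, L_D)$.

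Next, I would apply Courant--Fischer to $M$ and perform the change of variables $u \mapsto w \triangleq Q \Lambda^{-1/2} u$. Under this substitution $u^T M u = w^T L_C w$ and $u^T u = w^T Q \Lambda Q^T w = w^T L_D w$, so the Rayleigh quotient of $M$ at $u$ coincides with the generalized Rayleigh quotient $w^T L_C w/w^T L_D w$ at $w$. Since $u \mapsto Q \Lambda^{-1/2} u$ is a bijection from $\R^R$ onto $\mathrm{range}(Q) = \mathrm{range}(L_D)$, and moreover $w \neq 0$ in this range is equivalent to $L_D w \neq 0$, the two min--max problems range over matching sets. A standard Courant--Fischer rewriting (replacing ``$w \perp \zmat$ of rank $\le j-1$'' by ``$w$ lying in a codimension $j-1$ subspace'') then shows both quantities are equal.

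The main subtlety — and the only real obstacle — is that the generalized eigenvalue definition \eqref{eq:60210312} allows $w$ to range over all of $\R^V$ subject to $L_D w \neq 0$, not just over $\mathrm{range}(L_D)$. To handle this, I would decompose any such $w = w_\parallel + w_\perp$ with $w_\parallel \in \mathrm{range}(L_D)$ and $w_\perp \in \mathrm{null}(L_D)$; then $w^T L_D w = w_\parallel^T L_D w_\parallel$, and using that $\mathrm{null}(L_D) \subseteq \mathrm{null}(L_C)$ (a vector that is constant on every connected component of the demand graph is also constant on every component of the cost graph, since the cost graph's support is contained in the demand graph's support whenever $\scsdp$ is finite) we likewise have $w^T L_C w = w_\parallel^T L_C w_\parallel$. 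Hence the generalized Rayleigh quotient depends only on $w_\parallel$, and the min--max is unchanged when restricted to $\mathrm{range}(L_D)$. Combining with the previous paragraph yields $\lambda_j(M) = \lambda_j(L_C, L_D)$, as desired.
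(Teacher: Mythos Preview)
Your overall strategy matches the paper's: both arguments first simplify $P^T B_D Z B_D^T P$ to $M=\Lambda^{-1/2}Q^T L_C Q\Lambda^{-1/2}$ and then identify the ordinary spectrum of $M$ with the generalized spectrum of $(L_C,L_D)$ via the correspondence $u\leftrightarrow w=Q\Lambda^{-1/2}u$ between $\R^R$ and $\mathrm{range}(L_D)$. The paper phrases the last step as a chain $\lambda_j(M)=\lambda_j(Q^TL_CQ,\Lambda)=\lambda_j(QQ^TL_CQQ^T,L_D)=\lambda_j(L_C,L_D)$, invoking that generalized eigenvectors may be taken in $\mathrm{range}(L_D)$; you do the same thing directly with Courant--Fischer. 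So there is no real methodological difference.

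There is, however, a genuine gap in your handling of the ``main subtlety.'' Your assertion that $\mathrm{null}(L_D)\subseteq\mathrm{null}(L_C)$ because ``the cost graph's support is contained in the demand graph's support whenever $\scsdp$ is finite'' is false: $\scsdp$ is finite simply because the SDP normalizes $\sum_{u<v}D_{u,v}\|\xvec_u-\xvec_v\|^2=1$, and this says nothing about the supports of $C$ and $D$. For instance, take $V=\{1,2,3\}$, $D$ the single edge $\{1,2\}$, and $C$ the single edge $\{2,3\}$. Then $\scsdp$ is perfectly finite, yet $(0,0,1)\in\mathrm{null}(L_D)\setminus\mathrm{null}(L_C)$, and indeed one computes $\lambda_1(M)=\tfrac14$ while $\lambda_1(L_C,L_D)=0$ under the paper's definition~\eqref{eq:60210312}. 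So the equality you (and the paper) are asserting actually requires an additional hypothesis such as connectedness of the demand graph (which forces $\mathrm{null}(L_D)=\spn\{\ones\}\subseteq\mathrm{null}(L_C)$); the paper's one-line justification that ``any generalized eigenvector lies in the span of $L_D$'' glosses over exactly the same point. Your instinct to flag this step was right; only the justification needs to be replaced by an explicit assumption on $D$.
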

\begin{proof} First observe that $P^T B_D Z B_D^T P = \Lambda^{1/2}
  Q^T L_D^{\dagger} L_C L_D^{\dagger} Q \Lambda^{1/2} = \Lambda^{-1/2}
  Q^T L_C Q \Lambda^{-1/2}$. Since $\Lambda$ is positive definite, any
  eigenvalue of $\Lambda^{-1/2} Q^T L_C Q \Lambda^{-1/2}$ is also a
  generalized eigenvalue of matrices $Q^T L_C Q$ and $\Lambda$. $Q$ is
  an orthonormal matrix, therefore the generalized spectrum does not
  change when we transform both matrices by $Q$, implying that
  $\lambda_i(Q^T L_C Q, \Lambda) = \lambda_i(Q Q^T L_C Q Q^T, Q
  \Lambda Q^T) = \lambda_i(Q Q^T L_C Q Q^T, L_D)$. Proof is complete
  by observing that any generalized eigenvector, $z$, lies in the span
  of $L_D$ so that $Q Q^T z = z$. Hence $\lambda_i(Q Q^T L_C Q Q^T,
  L_D) = \lambda_i(L_C, L_D)$.
\end{proof}
Proof is complete by combining \Cref{clm:50910112} and \cref{eq:432}.
\end{proof}
We put everything together in our main result below.
\begin{theorem}
\label{cor:final-bnd}
Given $C, D: \binom{V}{2}\to \R_+$ representing cost and demand
graphs, let $ 0\le \lambda_1\le \lambda_2 \le \ldots \le \lambda_m $
be their generalized eigenvalues in ascending order.
For any positive integer $r$ and real $\eps>0$, on input $C$, $D$ and
$r' \triangleq \frac{r}{\eps}+r-1$, \Cref{alg:vanilla-sc} outputs a
subset $T\subset V$ whose sparsity $\Phi_T$ (w.r.t cost graph $C$ and
demand graph $D$) is bounded by:
\[
\Phi_T \le \Phi^\ast \left( 1 - (1+\eps)
  \frac{\Phi^\ast}{\lambda_{r+1}}\right)^{-1} \ \text{if}\
(1+\eps) \frac{\Phi^\ast}{\lambda_{r+1}} < 1.
\]
Furthermore using the SDP solver from~\cite{gs12-fast}, the running
time can be decreased to $2^{O(r')} \poly(n)$.  \vnote{Where does the
  ``provided" constraint come from? It is not transparent from the
  proof.}  \aknote{Necessary otherwise the quantity is negative, which
  means the upper bound becomes lower bound when we take the
  reciprocal.}
\end{theorem}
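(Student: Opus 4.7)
The plan is to assemble the main theorem by chaining together three already-established ingredients: the rounding bound of \Cref{thm:rnd-from-s}, the column selection guarantee of \Cref{thm:choose-s}, and the spectral estimate of \Cref{lem:sigma-bnd}; and then convert the SDP-value bound into a bound in terms of $\Phi^\ast$ using monotonicity and SDP-relaxation.

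First, I would apply \Cref{thm:rnd-from-s} to the vectors $\xvec$ produced by the SDP in \Cref{alg:vanilla-sc} and the seed set $\sde$ produced by \Cref{alg:select-s}, obtaining
\[
\Phi_T \le \scsdp \cdot \left(1 - \frac{\sum_{u<v} D_{u,v} \|P_{\sde}^{\perp}(\xvec_u-\xvec_v)\|^2}{\sum_{u<v} D_{u,v} \|\xvec_u - \xvec_v\|^2}\right)^{-1}.
\]
Rewriting both sums in terms of $\widehat{\xmat}$ from \Cref{not:xmat}, the ratio inside the parenthesis becomes $\|{(\widehat{\xmat}_{\sde})^{\perp}}\widehat{\xmat}\|_F^2 / \|\widehat{\xmat}\|_F^2$.

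Next, because $\sde$ was chosen by the column-selection algorithm of~\cite{gs11-svd} with budget $r' = r/\eps + r - 1$, \Cref{thm:choose-s} yields
\[
\|{(\widehat{\xmat}_{\sde})^{\perp}}\widehat{\xmat}\|_F^2 \le (1+\eps)\sum_{j\ge r+1}\sigma_j,
\]
and \Cref{lem:sigma-bnd} then gives $\sum_{j\ge r+1}\sigma_j / \|\widehat{\xmat}\|_F^2 \le \scsdp/\lambda_{r+1}$. Chaining these bounds produces
\[
\Phi_T \le \scsdp \left(1 - (1+\eps)\frac{\scsdp}{\lambda_{r+1}}\right)^{-1},
\]
valid whenever $(1+\eps)\scsdp/\lambda_{r+1} < 1$.

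Finally, since the Lasserre SDP \eqref{eq:sc-sdp0} is a relaxation of \nusc, we have $\scsdp \le \Phi^\ast$. I then observe that the map $t \mapsto t\bigl(1 - (1+\eps) t/\lambda_{r+1}\bigr)^{-1}$ is monotonically increasing on the interval where the denominator is positive (its derivative is $\bigl(1 - (1+\eps)t/\lambda_{r+1}\bigr)^{-2} > 0$). Therefore, whenever $(1+\eps)\Phi^\ast/\lambda_{r+1} < 1$ (which also forces $(1+\eps)\scsdp/\lambda_{r+1} < 1$), we can replace $\scsdp$ by the larger quantity $\Phi^\ast$ to obtain the claimed bound
\[
\Phi_T \le \Phi^\ast \left(1 - (1+\eps)\frac{\Phi^\ast}{\lambda_{r+1}}\right)^{-1}.
\]
The running-time claim follows from the naive $n^{O(r')}$ implementation plus the observation (as noted in \Cref{alg:vanilla-sc}) that the rounding fits into the local rounding framework of~\cite{gs12-fast}, which provides a solver running in $2^{O(r')}\poly(n)$. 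The main obstacle is conceptually the step bridging $\scsdp$ to $\Phi^\ast$; the monotonicity argument is routine, but one must be careful that the side condition $(1+\eps)\Phi^\ast/\lambda_{r+1} < 1$ (rather than the a priori weaker $\scsdp$-version) is exactly the hypothesis used to legitimize the substitution.
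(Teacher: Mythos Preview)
Your proposal is correct and follows essentially the same route as the paper: chain \Cref{thm:rnd-from-s}, \Cref{thm:choose-s}, and \Cref{lem:sigma-bnd}, then pass from $\scsdp$ to $\Phi^\ast$ via the relaxation inequality $\scsdp\le\Phi^\ast$. The only cosmetic difference is that the paper replaces the two occurrences of $\scsdp$ separately---first bounding $\|(\widehat{\xmat}_{\sde})^{\perp}\widehat{\xmat}\|_F^2/\|\widehat{\xmat}\|_F^2 \le (1+\eps)\scsdp/\lambda_{r+1}\le (1+\eps)\Phi^\ast/\lambda_{r+1}$ inside the denominator, then bounding the outer factor $\scsdp\le\Phi^\ast$---rather than packaging both into a single monotonicity argument for $t\mapsto t(1-(1+\eps)t/\lambda_{r+1})^{-1}$ as you do; the two are equivalent.
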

\begin{proof} Whenever $(1+\eps) \frac{\Phi^\ast}{\lambda_{r+1}} < 1$,
  the quantity $1 - (1+\eps) \frac{\Phi^\ast}{\lambda_{r+1}}$ is
  positive which means
\[\left(1 - \frac{\|{(\widehat{\xmat}_{\sde})^{\perp}}\widehat{\xmat}
    \|^2_F} {\|\widehat{\xmat}\|^2_F}\right)^{-1} \le \left(1 -
  (1+\eps) \frac{\Phi^\ast}{\lambda_{r+1}} \right)^{-1}
\] by \Cref{thm:choose-s,lem:sigma-bnd}. Substituting the bound
from~\Cref{thm:rnd-from-s} completes the proof.
\end{proof}
There are two interesting regimes in~\Cref{cor:final-bnd}, which we
highlight in~\Cref{cor:final-bnd-regimes}.
\begin{corollary}
\label{cor:final-bnd-regimes}
Given $C, D: \binom{V}{2}\to \R_+$ representing cost and demand
graphs, positive real $0<\delta \le \frac{1}{2}$ and positive integer
$r$, there exists an algorithm which outputs a subset $T\subset V$, in
time $2^{O(r/\delta)} n^{O(1)}$, such that:
\begin{itemize}
\item (Near Optimal) If $\Phi^\ast < \frac{1}{2} \delta \cdot
  \lambda_{r}$ then $\Phi_T \le \Phi^\ast (1+\delta)$;
\item (Constant Factor) Otherwise if $\Phi^\ast < (1-2 \delta)
  \lambda_{r}$ then $\Phi_T \le \frac{\Phi^\ast}{\delta}$.
\end{itemize}
\end{corollary}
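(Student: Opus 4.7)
The plan is to specialize~\Cref{cor:final-bnd} with a single, carefully chosen pair of parameters so that both regimes fall out as algebraic consequences. Because the theorem's bound features $\lambda_{r+1}$ while the corollary refers to $\lambda_r$, I would invoke~\Cref{cor:final-bnd} on input $(C,D,r-1)$ rather than $(C,D,r)$, so that the denominator $\lambda_{(r-1)+1}$ in the theorem's bound matches the $\lambda_r$ in the corollary. (The edge case $r=1$ is vacuous: since $\lambda_1 \le \Phi^\ast$, both of the sub-hypotheses $\Phi^\ast < \tfrac{\delta}{2}\lambda_1$ and $\Phi^\ast < (1-2\delta)\lambda_1$ can hold only when $\Phi^\ast=0$, which is trivial.) I would then set $\eps \triangleq \delta/2$, giving
\[
r' \;=\; \frac{r-1}{\eps} + (r-1) - 1 \;=\; \frac{2(r-1)}{\delta} + (r-2) \;=\; O(r/\delta),
\]
so the running time guaranteed by~\Cref{cor:final-bnd} (using the fast solver of \cite{gs12-fast}) is $2^{O(r/\delta)}\poly(n)$, as required.

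For the \emph{Near Optimal} regime, the hypothesis $\Phi^\ast < \tfrac{1}{2}\delta\,\lambda_r$ yields
\[
(1+\eps)\frac{\Phi^\ast}{\lambda_r} \;<\; \Bigl(1+\tfrac{\delta}{2}\Bigr)\tfrac{\delta}{2},
\]
which in particular is strictly below $1$, so the hypothesis of~\Cref{cor:final-bnd} is satisfied. Substituting the bound gives $\Phi_T \le \Phi^\ast\bigl(1-(1+\delta/2)(\delta/2)\bigr)^{-1}$, and the desired conclusion $\Phi_T\le \Phi^\ast(1+\delta)$ reduces to the elementary inequality $(1+\delta/2)(\delta/2)(1+\delta) \le 1$, which I would verify directly on $\delta\in(0,1/2]$ (the left side is a cubic in $\delta$ whose maximum on the interval is at most $\tfrac{5}{4}\cdot\tfrac{1}{2}\cdot\tfrac{3}{2}\cdot\tfrac{1}{2}<1$).

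For the \emph{Constant Factor} regime, the hypothesis $\Phi^\ast < (1-2\delta)\lambda_r$ gives
\[
(1+\eps)\frac{\Phi^\ast}{\lambda_r} \;<\; \Bigl(1+\tfrac{\delta}{2}\Bigr)(1-2\delta) \;=\; 1 - \tfrac{3\delta}{2} - \delta^2 \;\le\; 1-\delta,
\]
so $1-(1+\eps)\Phi^\ast/\lambda_r > \delta$, and~\Cref{cor:final-bnd} directly yields $\Phi_T < \Phi^\ast/\delta$. Since the same invocation with $\eps=\delta/2$ handles both regimes simultaneously, a single run of~\Cref{alg:vanilla-sc} suffices. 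There is no genuine obstacle here beyond checking that the chosen $\eps$ is compatible with both regimes at once; the only mild delicacy is the $r=1$ boundary case, which as noted is covered by the trivial observation $\lambda_1 \le \Phi^\ast$.
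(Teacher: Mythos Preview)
The paper states this corollary without proof, so there is nothing to compare against; your derivation from \Cref{cor:final-bnd} with the substitutions $r\mapsto r-1$ and $\eps=\delta/2$ is exactly the intended route, and the running-time accounting, the $r=1$ edge case, and the Constant Factor regime are all handled correctly.

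There is one algebraic slip in the Near Optimal regime. You write that $\Phi_T\le\Phi^\ast(1+\delta)$ ``reduces to the elementary inequality $(1+\delta/2)(\delta/2)(1+\delta)\le 1$,'' but that is too weak. From $\Phi_T\le \Phi^\ast\bigl(1-a\bigr)^{-1}$ with $a<(1+\tfrac{\delta}{2})\tfrac{\delta}{2}$, the conclusion $(1-a)^{-1}\le 1+\delta$ is equivalent to $a(1+\delta)\le\delta$, i.e.
\[
\Bigl(1+\tfrac{\delta}{2}\Bigr)\tfrac{\delta}{2}(1+\delta)\;\le\;\delta,
\quad\text{equivalently}\quad
\Bigl(1+\tfrac{\delta}{2}\Bigr)(1+\delta)\;\le\;2.
\]
This \emph{does} hold on $(0,\tfrac12]$ (at $\delta=\tfrac12$ the left side is $\tfrac{5}{4}\cdot\tfrac{3}{2}=\tfrac{15}{8}<2$), so the argument goes through once you replace ``$\le 1$'' by ``$\le \delta$'' and adjust the verification accordingly. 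Your numerical check $\tfrac{5}{4}\cdot\tfrac{1}{2}\cdot\tfrac{3}{2}\cdot\tfrac{1}{2}=\tfrac{15}{32}$ actually confirms the stronger bound too, since $\tfrac{15}{32}<\tfrac12=\delta$.
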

%
\section{Using Subspace Enumeration for \usc}
\label{apx:rse}
\vnote{Should we acknowledge the reviewer here again?} \aknote{Done.}
Throughout this section, we will assume that the cost graph with
weights $C:\binom{V}{2}\to \R_+$ and $\sum_u C_{u,v} = 1$ for any
$v$. 
Since $G$ is regular, definitions of uniform sparsest cut / normalized
cut and edge expansion / conductance coincide. Thus we will focus only
on \usc\ which we denote by $\phi^\ast$.

\def\uscopt{\phi^\ast} The following theorem is adapted
from~\cite{al08} for our setting:
\begin{theorem}[Cut Improvement, see~\cite{al08}]
  For any $x^\ast\in \{0,1\}^{V}$, given $x \in \{0,1\}^{V}$
  satisfying
  \[0<{\|x\|_1}{} \le \frac{n}{2} \text{ and }
  \frac{\langle x, x^\ast\rangle}{\|x^\ast\|_1} > \frac{ \|x\|_1}{n}
\] 
in polynomial time one can find $y\in \{0,1\}^{V}$ whose edge
expansion is within a factor \[
\le \frac{ 1 - \|x\|_1/n }{ \langle x, x^\ast\rangle/\|x^\ast\|_1 -
  \|x\|_1/n }\] of $x^\ast$'s edge expansion.
\end{theorem}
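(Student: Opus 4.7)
The plan is to reduce cut improvement to a single minimum $s$-$t$ cut computation in an auxiliary flow network, in the spirit of the Lang-Rao / Andersen-Lang framework. Set $S := \{v : x_v = 1\}$ and $S^\ast := \{v : x^\ast_v = 1\}$, and abbreviate $a := \|x\|_1/n = |S|/n$ and $c := \langle x, x^\ast\rangle/\|x^\ast\|_1 = |S \cap S^\ast|/|S^\ast|$; the hypothesis is $0 < a \le 1/2$ and $c > a$. Let $\phi^\ast := C(S^\ast, V\setminus S^\ast)/|S^\ast|$ denote the edge expansion of $x^\ast$ (replacing $x^\ast$ by $\mathbf{1} - x^\ast$ if needed so that $|S^\ast| \le n/2$).

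I would construct a capacitated graph $H$ on vertex set $V \cup \{s,t\}$ by keeping every original edge with capacity $C_{u,v}$, and adjoining $s \to v$ with capacity $\alpha$ for each $v \in S$ and $v \to t$ with capacity $\beta$ for each $v \notin S$, for positive scalars $\alpha, \beta$ to be tuned. Compute a minimum $s$-$t$ cut in $H$, and let $Y \subseteq V$ be the non-terminal vertices on the $s$-side; output $y := \ind{Y}$ (swapping for its complement if needed so that $|Y| \le n/2$). The cost of this cut in $H$ is
\[ C(Y, V \setminus Y) \;+\; \alpha\,|S \setminus Y| \;+\; \beta\,|Y \setminus S|, \]
whose first term is the numerator of the edge expansion of $Y$.

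Because $S^\ast$ itself is a feasible $s$-$t$ cut of $H$, minimum-cut optimality gives
\[ C(Y, V \setminus Y) + \alpha\,|S \setminus Y| + \beta\,|Y \setminus S| \;\le\; \phi^\ast\,|S^\ast| + \alpha\,|S \setminus S^\ast| + \beta\,|S^\ast \setminus S|. \]
I would choose $\alpha$ and $\beta$ proportional to $(n - |S|)$ and $|S|$ respectively (scaled by $\phi^\ast$), so that on the right-hand side the two penalty terms combine to $\phi^\ast \cdot n \cdot (a - c) \cdot |S^\ast| + $ (bookkeeping terms that cancel with a lower bound on $|Y|$ coming from the nonnegativity of the left-hand penalties). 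Rearranging, dividing through by $|Y|$, and using the lower bound on $|Y|$ forced by the penalty terms $\alpha|S\setminus Y| + \beta|Y\setminus S|$, the inequality
\[ \frac{C(Y, V \setminus Y)}{|Y|} \;\le\; \frac{1 - a}{c - a} \cdot \phi^\ast \]
should fall out, which is the claimed bound.

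The main obstacle is the calibration of $\alpha, \beta$: since $\phi^\ast$ is not known a priori, the algorithm either guesses $\phi^\ast$ from the $O(n^2)$ discrete values $C(A, V\setminus A)/|A|$ realized by parametric min-cuts of $H$ (as $\alpha/\beta$ sweeps through a one-parameter family), or directly runs a parametric min-cut in polynomial time and selects the best slice. A secondary subtlety is verifying that the returned $Y$ is neither $\emptyset$ nor all of $V$: the hypothesis $c > a$ is precisely the inequality that makes the auxiliary $S^\ast$-cut strictly cheaper than these trivial cuts in $H$, so the extracted $Y$ is guaranteed to be non-trivial. Once the parametric search and the trivial-cut exclusion are in place, the algebraic manipulation reducing the min-cut inequality to the stated expansion factor is a routine calculation.
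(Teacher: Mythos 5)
The paper does not actually prove this theorem; it is cited verbatim (``adapted from [al08] for our setting'') and the proof is deferred to Andersen--Lang. So there is no internal proof to compare against; I will evaluate your sketch against what the reference does.

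Your construction is the right one, and the calibration you gesture at does work. Concretely, with $a=|S|/n$ and $c=|S\cap S^\ast|/|S^\ast|$, setting $\alpha=q(n-|S|)$ and $\beta=q|S|$ for a scalar $q>0$ makes the auxiliary min-cut objective equal to $C(T,\bar T)-q\,f(T)+\text{const}$, where $f(T)=|S\cap T|-\tfrac{a}{1-a}|T\setminus S|$. The two facts that drive the bound are: (i) whenever $a\le 1/2$ one has $f(T)\le\min(|T|,\,n-|T|)$ for every $T$ (so $C(T,\bar T)/f(T)$ dominates expansion whenever $f(T)>0$), and (ii) $f(S^\ast)=|S^\ast|\cdot\tfrac{c-a}{1-a}$, which is positive precisely because of the hypothesis $c>a$ and gives $C(S^\ast,\overline{S^\ast})/f(S^\ast)=\tfrac{1-a}{c-a}\phi^\ast$. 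Minimizing the quotient $C(T,\bar T)/f(T)$ and comparing the minimizer to $S^\ast$ yields the claim. This is exactly your ``routine calculation,'' but it is worth recording fact (i) --- it is the only place the constraint $\|x\|_1\le n/2$ is used, and it is what makes $f$ a legitimate surrogate denominator for expansion.

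One detail in your sketch I would tighten: the statement that the algorithm ``guesses $\phi^\ast$ from the $O(n^2)$ discrete values $C(A,V\setminus A)/|A|$'' is not quite right --- $\phi^\ast$ is the expansion of the unknown $S^\ast$, not something the algorithm ever sees. The clean fix, which you also mention, is to never try to learn $\phi^\ast$ at all: treat $\min_T C(T,\bar T)/f(T)$ as a linear-fractional program, and solve it by a one-parameter sweep over $q$ (parametric min-cut), selecting the breakpoint with the smallest quotient. The ratio $\alpha/\beta=(1-a)/a$ is fixed by $S$; only the overall scale $q$ varies, and the nestedness of parametric min-cuts gives at most $O(n)$ breakpoints. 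With that small correction, your argument is a faithful reconstruction of the Andersen--Lang cut-improvement proof.
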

\vnote{What happened to the proof you wrote for this?}  \aknote{I only
  had proof of corollary, which is commented out. Do you want to put
  it back?}  The following is adapted from~\cite{ABS}:
\begin{theorem}[Eigenspace Enumeration~\cite{ABS}]
  In time $2^{O(r)} n^{O(1)}$, there exists an algorithm which outputs
  a set $X\subseteq \{0,1\}^{V}$ that contains some $x\in X$ with
  following property.  There exists $x^\ast\in \{0,1\}^{V}$ with:
\[
\frac{\|x - x^\ast\|_1}{ \|x^\ast\|_1 } \le \frac{8}{\lambda_r}
\phi^\ast. 
\]
\end{theorem}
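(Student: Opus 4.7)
The plan is to exploit the fact that the indicator vector $x^\ast = \ind{T^\ast}$ of an optimal $\phi^\ast$-cut is almost entirely supported on the low-eigenvalue subspace $W_r$ of $L_C$ (dimension $O(r)$), so that enumerating a net in that low-dimensional subspace followed by threshold rounding suffices.

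First, spectral concentration. Assume $|T^\ast| \le n/2$, so that $x^{\ast T} L_C x^\ast = C(T^\ast, V\setminus T^\ast) = \phi^\ast |T^\ast| = \phi^\ast \|x^\ast\|_2^2$. Expanding $x^\ast$ in the orthonormal eigenbasis $v_1,\ldots,v_n$ of $L_C$ and letting $W_r = \mathrm{span}(v_1,\ldots,v_{r-1})$, we have
\[
\lambda_r \|x^\ast - P_{W_r} x^\ast\|_2^2 \le x^{\ast T} L_C x^\ast \quad\Longrightarrow\quad \|x^\ast - P_{W_r} x^\ast\|_2^2 \le \frac{\phi^\ast}{\lambda_r}\|x^\ast\|_2^2.
\]

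Second, build a $\delta$-net $N$ of the unit sphere of $W_r$, of size $(3/\delta)^{r-1}$. For each $\hat u \in N$, each candidate size $s \in [n]$, and each threshold $\tau$ from a polynomial grid, put into $X$ the 0/1 vector with $v$-th coordinate $\mathbf{1}[\sqrt{s}\,\hat u(v) \ge \tau]$. Letting $y^\ast = P_{W_r} x^\ast$, pick the net point $\hat u^\ast$ closest to $y^\ast/\|x^\ast\|_2$ with $s = |T^\ast|$. The spectral bound combined with the triangle inequality gives
\[
\|\sqrt{|T^\ast|}\,\hat u^\ast - x^\ast\|_2^2 \le O(\delta^2 + \phi^\ast/\lambda_r)\,|T^\ast|.
\]
Since $x^\ast$ is 0/1, any coordinate misclassified by the best threshold rounding of a real vector $w$ contributes at least a constant to $\|w - x^\ast\|_2^2$, giving $\|\mathrm{round}(w) - x^\ast\|_1 \le O(1)\cdot\|w - x^\ast\|_2^2$. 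Choosing $\delta$ appropriately and tracking the constants yields $\|x - x^\ast\|_1 \le (8\phi^\ast/\lambda_r)\,\|x^\ast\|_1$.

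The main obstacle is balancing the net resolution $\delta$ against the $2^{O(r)}$ running-time budget. A constant $\delta$ gives $2^{O(r)}$ net points but contributes an $O(1)\|x^\ast\|_2^2$ error, dominating the target $O(\phi^\ast/\lambda_r)\|x^\ast\|_2^2$ once $\phi^\ast \ll \lambda_r$. A uniform net of resolution $\delta = O(\sqrt{\phi^\ast/\lambda_r})$ inflates the count to $2^{O(r \log(\lambda_r/\phi^\ast))}$, matching the clean $2^{O(r)}n^{O(1)}$ runtime only when the spectral ratio is bounded; the ABS refinement circumvents this logarithmic overhead by enumerating over sign patterns of the bottom eigenvectors on chosen pivot vertices rather than using a uniform $\ell_2$-net. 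Finally, since $|T^\ast|$, $\tau$, and $\phi^\ast$ are unknown a priori, they must all be guessed, but only over polynomial-sized grids.
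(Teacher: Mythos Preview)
The paper does not prove this theorem; it is quoted from \cite{ABS} as a black box, so there is no paper-side argument to compare against and your proposal must stand on its own.

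Your spectral-concentration step is correct, and the threshold-rounding step is fine once you fix the threshold at $1/2$ (each misclassified coordinate then contributes at least $1/4$ to $\|w-x^\ast\|_2^2$, giving $\|x-x^\ast\|_1 \le 4\|w-x^\ast\|_2^2$). The genuine gap is precisely the one you yourself flag as ``the main obstacle'': the $\ell_2$-net argument you actually carry out does \emph{not} deliver both the claimed $8\phi^\ast/\lambda_r$ error and the claimed $2^{O(r)} n^{O(1)}$ time. A constant-resolution net keeps the running time at $2^{O(r)}$ but leaves an additive $O(\delta^2)\|x^\ast\|_1$ error that dominates the target whenever $\phi^\ast/\lambda_r$ is small; shrinking $\delta$ to $O(\sqrt{\phi^\ast/\lambda_r})$ fixes the error but inflates the net to $2^{O(r\log(\lambda_r/\phi^\ast))}$ points. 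Your closing sentence---``the ABS refinement circumvents this logarithmic overhead by enumerating over sign patterns of the bottom eigenvectors on chosen pivot vertices''---is an accurate diagnosis of what is missing, but it is a citation, not a proof step.

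To close the gap you would have to actually give that refinement: replace the uniform sphere net by the (combinatorial, not metric) enumeration that \cite{ABS} uses, and show it has size $2^{O(r)} n^{O(1)}$ while still hitting a point at least as good as $P_{W_r} x^\ast$ for rounding purposes. As written, the proposal is a correct road map whose decisive turn is left untaken.
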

Combining these two, one can obtain the following\footnote{We thank an
  anonymous reviewer for pointing out this combination for uniform
  sparsest cut.}:
\begin{corollary} %
  For any positive integer $r$, if $r^{th}$ smallest eigenvalue of
  Laplacian matrix for cost graph satisfies $\lambda_r > 8 \phi^\ast$
  where $\phi^\ast$ is the \usc\ value, then in time $n^{O(1)}
  2^{O(r)}$ one can find $y\in \{0,1\}^{V}$ whose uniform sparsity is
  bounded by:
  \[
  \frac{2 \phi^\ast}{ 1 - 8 \frac{\phi^\ast}{\lambda_r}}.
  \]
\end{corollary}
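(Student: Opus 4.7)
The plan is to feed the output of Eigenspace Enumeration into Cut Improvement, track the quantitative loss, and verify that both (i) the hypotheses of Cut Improvement are met and (ii) the resulting factor is at most $2/(1-8\phi^\ast/\lambda_r)$. Let $\epsilon \triangleq 8\phi^\ast/\lambda_r$, which lies in $(0,1)$ by the assumption $\lambda_r > 8\phi^\ast$. Let $x^\ast\in\{0,1\}^V$ denote the indicator of an optimal uniform sparsest cut. Since $\phi(\cdot)$ is invariant under complementation, we may assume $\|x^\ast\|_1 \le n/2$. The algorithm runs Eigenspace Enumeration to produce $X \subseteq \{0,1\}^V$ of size $2^{O(r)}$, closes it under complementation, and applies Cut Improvement to every $x\in X$ (with $\|x\|_1 \le n/2$, positive); it returns the $y$ of smallest edge expansion. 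The running time is $2^{O(r)} n^{O(1)}$ as required.

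Fix the $x\in X$ guaranteed by Eigenspace Enumeration, with $\|x-x^\ast\|_1 \le \epsilon \|x^\ast\|_1$. Write $s \triangleq \|x^\ast\|_1/n \le 1/2$, $b \triangleq \|x\|_1/n$, and $c \triangleq \langle x,x^\ast\rangle/\|x^\ast\|_1$. Since $x,x^\ast \in \{0,1\}^V$, the identity
\[
\langle x,x^\ast\rangle = \tfrac{1}{2}\bigl(\|x\|_1 + \|x^\ast\|_1 - \|x-x^\ast\|_1\bigr)
\]
together with $\|x-x^\ast\|_1 \le \epsilon sn$ yields
\[
c \;\ge\; \frac{b}{2s} + \frac{1-\epsilon}{2},
\qquad\text{hence}\qquad
c - b \;\ge\; \frac{b(1-2s)}{2s} + \frac{1-\epsilon}{2} \;\ge\; \frac{1-\epsilon}{2} \;>\; 0,
\]
where the last inequality uses $s\le 1/2$ (so $1-2s\ge 0$). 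In particular, the correlation hypothesis $c > b$ of the Cut Improvement theorem holds. The size hypothesis $\|x\|_1 \le n/2$ is automatic when $s \le 1/(2(1+\epsilon))$; otherwise we work with the complementary candidate $\tilde x = \mathbf 1 - x$, which is close to $\mathbf 1 - x^\ast$ with the same $\ell_1$ error, and repeat the calculation with the roles of $x^\ast$ and $\mathbf 1-x^\ast$ swapped. Either way we obtain a valid input to Cut Improvement.

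Plugging these estimates into the guarantee of Cut Improvement, the edge expansion of the output $y$ is at most
\[
\phi^\ast \cdot \frac{1-b}{c-b} \;\le\; \phi^\ast \cdot \frac{1}{(1-\epsilon)/2} \;=\; \frac{2\phi^\ast}{1-\epsilon} \;=\; \frac{2\phi^\ast}{1 - 8\phi^\ast/\lambda_r},
\]
which is the claimed bound. Since the algorithm returns the best $y$ over all $2^{O(r)}$ candidates (and both orientations), this bound applies to the output.

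The main obstacle I foresee is the orientation issue: the $\ell_1$-closeness of $x$ to $x^\ast$ does not by itself guarantee $\|x\|_1 \le n/2$, and one must handle the near-bisection regime $\|x^\ast\|_1 \approx n/2$ by examining $\mathbf 1 - x$ and noting that the same algebraic bound $c-b \ge (1-\epsilon)/2$ survives. The remaining steps are an application of the two cited theorems followed by a routine one-line algebraic manipulation.
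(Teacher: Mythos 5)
Your high-level approach---feeding the Eigenspace Enumeration candidate into Cut Improvement, and controlling the Cut Improvement ratio $\frac{1-b}{c-b}$ via the binary-vector identity $\langle x,x^\ast\rangle = \tfrac12(\|x\|_1 + \|x^\ast\|_1 - \|x-x^\ast\|_1)$---is exactly what the paper intends (it only sketches the combination, remarking ``combining these two, one can obtain the following''). The derivation for the case $\|x\|_1\le n/2$ is correct, and the inequality $c-b\ge(1-\epsilon)/2$ there is clean.

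However, there is a genuine gap in the complementary case, and your parenthetical ``the same algebraic bound $c-b\ge(1-\epsilon)/2$ survives'' is not true as stated. The inequality $c-b\ge\frac{b(1-2s)}{2s}+\frac{1-\epsilon}{2}\ge\frac{1-\epsilon}{2}$ crucially uses $s\le 1/2$ to make the first summand nonnegative. When you pass to $\tilde x = \mathbf 1 - x$ and $\tilde x^\ast=\mathbf 1 - x^\ast$, you have $\tilde s = 1-s\ge 1/2$, so $1-2\tilde s\le 0$ and that summand becomes nonpositive; indeed one can exhibit parameters in this regime with $\tilde c-\tilde b < (1-\epsilon)/2$ (e.g.\ any $\epsilon<1/2$ with $\frac{1}{2(1+\epsilon)}<s<\frac12$). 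Fortunately the corollary still holds, but you need a different argument for this case. Using $1-\tilde b = b \le (1+\epsilon)s$, the identity gives
\[
\tilde c - \tilde b \;\ge\; \frac{(1-b)(2s-1) + 1 - s - \epsilon s}{2(1-s)},
\]
and substituting $b=(1+\epsilon)s$ (the worst case, since the ratio is increasing in $b$) the denominator simplifies to $2s\bigl(1-(1+\epsilon)s\bigr)$, giving a ratio of at most $\frac{(1+\epsilon)(1-s)}{1-(1+\epsilon)s}$, which is maximized at $s=1/2$ and equals $\frac{1+\epsilon}{1-\epsilon}\le\frac{2}{1-\epsilon}$. So you should replace the one-sentence appeal to symmetry with this short calculation, which treats the complemented case on its own terms rather than claiming the original bound transfers.

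One further caveat worth making explicit: Cut Improvement bounds the \emph{edge expansion} of $y$ relative to the edge expansion of $x^\ast$, while the corollary asserts a bound on the \emph{uniform sparsity} of $y$ in terms of $\phi^\ast$, the optimal USC value; the paper opens the section by noting these quantities ``coincide'' for a regular graph (up to the normalization being used), and your proof implicitly relies on this identification. It would be cleaner to state which normalization you are adopting so that the transition from edge expansion of $y$ to uniform sparsity of $y$, and from edge expansion of $x^\ast$ to $\phi^\ast$, is explicit and no hidden constant is lost.
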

\ngap
\section*{Acknowledgments}
We thank anonymous reviewers for their helpful comments on earlier
drafts of this paper as well as pointing out the combination of
\cite{al08} with \cite{ABS} to approximate \usc.
%
\bibliographystyle{abbrv}
\bibliography{../hdr-bib/references}

\begin{thebibliography}{10}

\bibitem{al08}
R.~Andersen and K.~J. Lang.
\newblock An algorithm for improving graph partitions.
\newblock In {\em SODA}, pages 651--660, 2008.

\bibitem{ABS}
S.~Arora, B.~Barak, and D.~Steurer.
\newblock Subexponential algorithms for {Unique} {Games} and related problems.
\newblock In {\em FOCS}, pages 563--572, 2010.

\bibitem{ALN}
S.~Arora, J.~Lee, and A.~Naor.
\newblock Euclidean distortion and the sparsest cut.
\newblock {\em J. American Mathematical Society}, 21(1):1--21, 2008.

\bibitem{ARV}
S.~Arora, S.~Rao, and U.~V. Vazirani.
\newblock Expander flows, geometric embeddings and graph partitioning.
\newblock {\em J. ACM}, 56(2), 2009.

\bibitem{ar}
Y.~Aumann and Y.~Rabani.
\newblock An $o(\log k)$ approximate {M}in-{C}ut {M}ax-{F}low theorem and
  approximation algorithm.
\newblock {\em SIAM J. Comput.}, 27(1):291--301, 1998.

\bibitem{ap90a}
B.~Awerbuch and D.~Peleg.
\newblock Sparse partitions (extended abstract).
\newblock In {\em FOCS}, pages 503--513, 1990.

\bibitem{bl84}
S.~N. Bhatt and F.~T. Leighton.
\newblock A framework for solving {VLSI} graph layout problems.
\newblock {\em J. Comput. Syst. Sci.}, 28(2):300--343, 1984.

\bibitem{CGR}
S.~Chawla, A.~Gupta, and H.~R{\"a}cke.
\newblock Embeddings of negative-type metrics and an improved approximation to
  generalized sparsest cut.
\newblock {\em ACM T. Algorithms}, 4(2), 2008.

\bibitem{gs11-qip}
V.~Guruswami and A.~K. Sinop.
\newblock {L}asserre hierarchy, higher eigenvalues, and approximation schemes
  for graph partitioning and quadratic integer programming with {PSD}
  objectives.
\newblock In {\em FOCS}, 2011.

\bibitem{gs12-fast}
V.~Guruswami and A.~K. Sinop.
\newblock Faster {SDP} hierarchy solvers for local rounding algorithms.
\newblock In {\em FOCS}, 2012.

\bibitem{gs11-svd}
V.~Guruswami and A.~K. Sinop.
\newblock Optimal column-based low-rank matrix reconstruction.
\newblock In {\em SODA}, 2012.

\bibitem{hj-mat-book}
R.~A. Horn and C.~R. Johnson.
\newblock {\em Matrix analysis}.
\newblock Cambridge University Press, 1990.

\bibitem{kolla10}
A.~Kolla.
\newblock Spectral algorithms for unique games.
\newblock In {\em CCC}, pages 122--130, 2010.

\bibitem{Las02}
J.~B. Lasserre.
\newblock An explicit equivalent positive semidefinite program for nonlinear
  0-1 programs.
\newblock {\em SIAM J. Optimization}, 12(3):756--769, 2002.

\bibitem{LR}
F.~T. Leighton and S.~Rao.
\newblock An approximate max-flow min-cut theorem for uniform multicommodity
  flow problems with applications to approximation algorithms.
\newblock In {\em FOCS}, pages 422--431, 1988.

\bibitem{llr}
N.~Linial, E.~London, and Y.~Rabinovich.
\newblock The geometry of graphs and some of its algorithmic applications.
\newblock {\em Combinatorica}, 15(2):215--245, 1995.

\bibitem{sm00}
J.~Shi and J.~Malik.
\newblock Normalized cuts and image segmentation.
\newblock {\em IEEE Trans. Pattern Anal. Mach. Intell.}, 22(8):888--905, 2000.

\bibitem{sg07a}
A.~K. Sinop and L.~Grady.
\newblock Uninitialized, globally optimal, graph-based rectilinear shape
  segmentation: {The Opposing Metrics} method.
\newblock In {\em ICCV}, pages 1--8, 2007.

\end{thebibliography}
\end{document}